\newtheorem{theorem}{Theorem}
\newtheorem{assumption}{Assumption}
\newtheorem{definition}{Definition}
\newtheorem{lemma}{Lemma}
\newtheorem{remark}{Remark}
\newtheorem{corollary}{Corollary}
\newtheorem{example}{Example}
\def\BibTeX{{\rm B\kern-.05em{\sc i\kern-.025em b}\kern-.08em
    T\kern-.1667em\lower.7ex\hbox{E}\kern-.125emX}}
\begin{document}
\title{Learning Control for LQR with Unknown Packet Loss Rate Using Finite Channel Samples}
\author{Zhenning Zhang, Liang Xu, \IEEEmembership{Member, IEEE}, Yilin Mo, \IEEEmembership{Member, IEEE}, and Xiaofan Wang, \IEEEmembership{Senior Member, IEEE}
\thanks{Zhenning Zhang and Xiaofan Wang are with the School of Mechatronic Engineering and Automation, Shanghai University, Shanghai, China. Email: mathzzn@163.com, xfwang@shu.edu.cn}
\thanks{Liang Xu is with the School of Future Technology, Shanghai University, Shanghai, China. Email: liang-xu@shu.edu.cn (corresponding author: Liang Xu)}
\thanks{Yilin Mo is with the Department of Automation and BNRist, Tsinghua University, Beijing, China. Email: ylmo@tsinghua.edu.cn}
\thanks{The work was supported in part by the National Natural Science Foundation of China under Grant 62273223, 62373239, 62333011, and the Project of Science and Technology Commission of Shanghai Municipality under Grant 22JC1401401.}
}

\maketitle

\begin{abstract}
This paper studies the linear quadratic regulator (LQR) problem over an unknown Bernoulli packet loss channel. 
The unknown loss rate is estimated using finite channel samples and a certainty-equivalence (CE) optimal controller is then designed by treating the estimate as the true rate. 
The stabilizing capability and sub-optimality of the CE controller critically depend on the estimation error of loss rate.
For discrete-time linear systems, we provide a stability threshold for the estimation error to ensure closed-loop stability, and analytically quantify the sub-optimality in terms of the estimation error and the difference in modified Riccati equations.
Next, we derive the upper bound on sample complexity for the CE controller to be stabilizing.
Tailored results with less conservatism are delivered for scalar systems and n-dimensional systems with invertible input matrix. 
Moreover,  we establish a sufficient condition, independent of the unknown loss rate, to verify whether the CE controller is stabilizing in a probabilistic sense. 
Finally, numerical examples are used to validate our results.
\end{abstract}

\begin{IEEEkeywords}
LQR, Bernoulli packet drop, stability threshold, sample complexity, performance analysis.
\end{IEEEkeywords}

\section{Introduction}
Networked control systems (NCSs) are widely adopted for their cost-efficiency and flexibility, yet they also introduce new challenges \cite{NCSs survey 1,NCSs survey 2,NCSs survey3,NCSs survey5,NCSs6,NCSs survey4}.
Packet drop, as a common problem in NCSs, disrupts the connection loop between different devices \cite{packet drop 1,cyber1,bernoulli1,kalman2005,06optimal,foundation,Zhang08Modelling,Xiong07Stabilization}. The seminal work \cite{kalman2005} firstly considers Kalman filtering with packet loss which follows a Bernoulli distribution. Subsequently, control problems over lossy channels were studied, e.g., \cite{06optimal,foundation}. The work \cite{06optimal} investigates the linear quadratic regulator (LQR) over Bernoulli packet loss channels, where packets in sensor-controller (SC) and controller-actuator (CA) channels may be lost. The work \cite{foundation} extends LQR with packet drop to the Linear Quadratic Gaussian (LQG) problem and proves that the separation principle still holds under the TCP protocol. 
According to \cite{06optimal,foundation}, we know that the optimal controller over lossy channels depends on the channel information, i.e. the packet drop probability. Additionally, the works \cite{Zhang08Modelling,Xiong07Stabilization,packet drop 1} present suboptimal controllers for NCSs with packet loss compensation mechanisms, which also rely on channel information.

However, obtaining channel information may involve high costs or even be impossible \cite{channel information,unknown channel 3}. Many studies have begun to exploit NCSs over unknown channels \cite{unknown channel 1,unknown channel 2,unknown channel 3,unknown channel 4,unknown channel 5,unknown channel 6,unknown channel 7}. 
The work \cite{unknown channel 4} considers control problems over multiple channels with unknown Bernoulli packet drop and employs Thompson sampling to learn loss probabilities, without restrictions on sample size. Yet in practice, only a finite set of data is available. Novel perspectives based on finite-sample learning and non-asymptotic analysis provide a more detailed description of the learning difficulty \cite{finite sample,finite sample2}. 
The works \cite{unknown channel 5,unknown channel 6} consider a linear system with unknown Bernoulli packet drop, which is stable if the drop probability is sufficiently low and unstable if it is not. They use finite samples to estimate the unknown probability and then answer the system is stable or not on a confidence level. The key of these works is Hoeffding's inequality which establishes a bridge between sample size and estimation error. But the controller design problems have not been considered. Furthermore, \cite{unknown channel 7} considers LQR over an unknown Bernoulli lossy channel. They utilize finite samples to construct a confidence interval for the unknown true loss probability and then design a certainty-equivalence (CE) optimal controller based on the worst-case (i.e., the maximum) loss probability within the confidence interval. The goal of [22] is to provide upper and lower bounds on controller performance that can be computed in practice, independent of the unknown true loss probability. In fact, it also indicates that the worst-case loss probability lies within the stabilizing set of estimated loss probabilities ensuring closed-loop stabilization via the CE controller, as the infinite-horizon cost function of the ``worst-case" CE controller is bounded.  

Building upon \cite{unknown channel 7}, a further question arises: how to determine the entire stabilizing set of estimated loss probabilities that ensure closed-loop stability? In other words, is there a threshold on the estimation error that guarantees stabilization via the CE controller? Additionally, what is the sample complexity for the CE controller to be stabilizing? Furthermore, the estimation error of loss probability may create an optimality gap between the CE and the optimal controllers. While \cite{unknown channel 7} provides practical performance bounds tailored to the ``worst-case" CE controller, we are also interested in how to theoretically quantify the sub-optimality of the CE controller synthesized from any stabilizing estimated probability in terms of the estimation error.


This paper also consider LQR problem with unknown Bernoulli packet loss. Signals in the CA channel may be lost, resulting in a system input of zero. Finite channel samples are used to estimate loss probability, and the CE controller is then designed directly based on the estimated probability. The main contributions are listed below.

\begin{enumerate}
    \item First, a necessary and sufficient condition is provided for ensuring stabilization via the CE controller in scalar case. Then it is extended to $n$-dimensional systems as a sufficient condition. Based on the stabilizing conditions, we establish the existence of a stability threshold for the estimation error of loss probability, and explicitly express its lower bound for $n$-dimensional systems, below which the CE controller is stabilizing. Additionally, tailored results with less conservatism are provided for scalar systems and $n$-dimensional systems with an invertible input matrix.
    \item We derive the sample complexity for the CE controller to be stabilizing, and explicitly express its upper bound. Tailored results are also provided for scalar systems and $n$-dimensional systems with an invertible input matrix. Moreover, a practical sufficient condition is established to determine whether the CE controller is stabilizing, independent of the unknown true loss probability.
    \item We quantify the sub-optimality of the CE controller incurred in terms of estimation error of loss probability. The optimality gap is analytically expressed as a linear combination of the estimation error and the difference between the solutions of modified Riccati equations corresponding to the estimated and true probabilities.
\end{enumerate}

The remainder of this paper is organized as follows. Section \ref{section2} introduces problem formulations; Section \ref{section3} analyzes estimation error threshold; Section \ref{section4} presents sample complexity; Section \ref{section5} provides performance analysis; Section \ref{section6} offers numerical simulations; Section \ref{section7} concludes the paper.

\textit{Notation:} $\mathbb{R}^{n\times m}$ is the $n\times m$ dimensional real matrix set; $\mathbb{R}_+$ is the set of non-negative real numbers; $\rho(M)$ is the spectral radium of a matrix $M\in\mathbb{R}^{n\times n}$; $\lambda_{\min}(M)$ is the minimum eigenvalue of a real symmetric matrix $M$; $\lambda_{\max}(M)$ is the maximum eigenvalue of a real symmetric matrix $M$; $M^{\top}$ is the transpose of a matrix $M$; $\mathcal{I}$ is the identity matrix with appropriate dimensions; $\mathcal{P}(\cdot)$ is probability measure; $\log(\cdot)$ is the logarithm of a positive real number in base $e$.

\section{Preliminaries}\label{section2}
\subsection{Problem Formulation}
Considering the linear time-invariant system
\begin{equation}\label{1}
    x_{t+1}=Ax_t+\lambda_tBu_t,~t=0,1,2,\dots
\end{equation}
where $x_t\in\mathbb{R}^n$ is the system state, $x_0$ is the initial value with finite expectation and variance, $u_t\in\mathbb{R}^{m}$ is the control input, $A\in\mathbb{R}^{n\times n}$ is the state matrix, $B\in\mathbb{R}^{n\times m}$ is the input matrix; $\{\lambda_t\}_{t=0}^{+\infty}$ is a sequence of i.i.d. Bernoulli random variables satisfying the following distribution 
\begin{equation}\label{lambda distribution}
\lambda_t=\left\{\begin{array}{cc}
     0~~~w.p.&q  \\
     1~~~w.p.&1-q
\end{array}\right.,    
\end{equation}
where $q\in(0,1)$. $\{\lambda_t\}_{t=0}^{+\infty}$ models the packet losses that may occur in the communication channel \cite{06optimal}. At each instant, the packet of control signal may be lost with a probability of $q$ (i.e., loss rate), or successfully transmitted to the system with a probability of $1-q$. If the control packet is lost, the system input becomes zero at the current time. The closed-loop system with packet drop is shown in Fig. \ref{closed loop}.
\begin{figure}[h]
    \centering
    \includegraphics[width=0.8\linewidth]{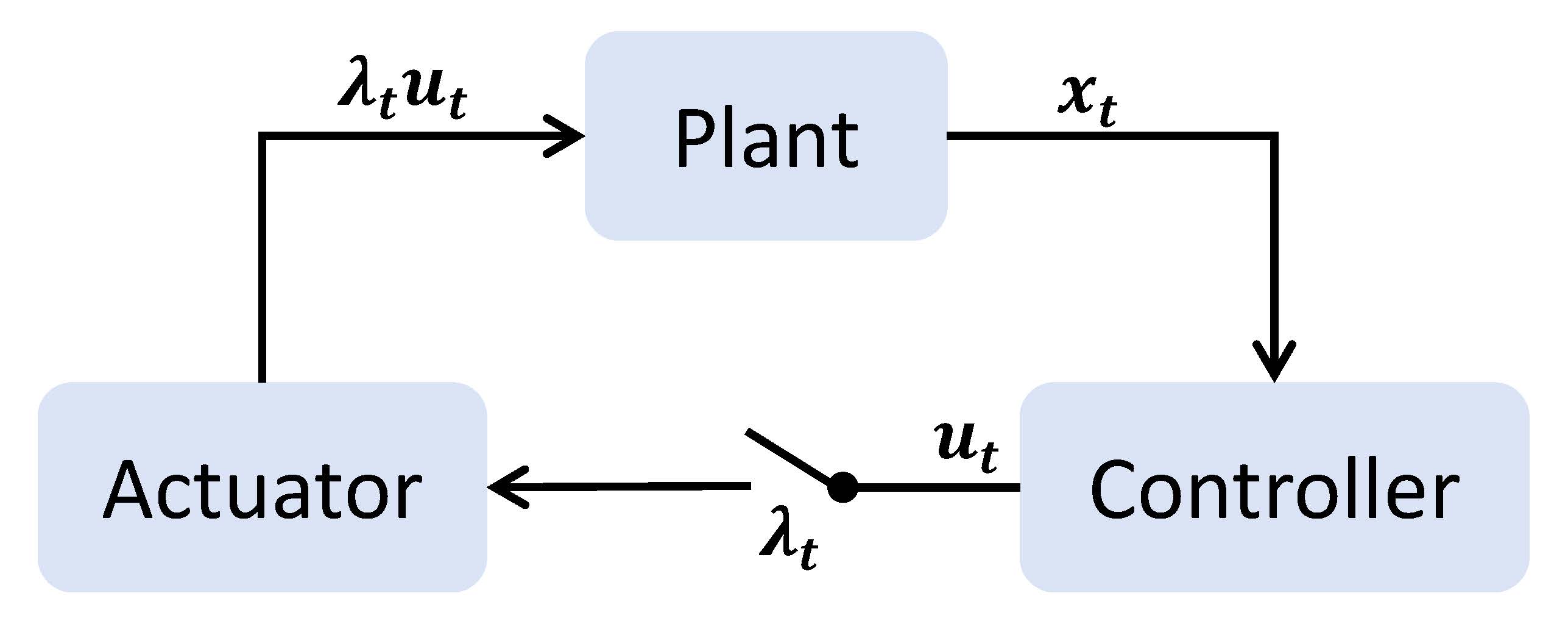}
    \caption{The closed-loop system with packet drop.}
    \label{closed loop}
\end{figure}

The cost function is 
\begin{equation}\label{2}
    J(x_0,\textbf{u})=\mathbb{E}\Bigg\{\sum_{t=0}^{\infty}x_t^{\top}Qx_t+\lambda_t u_t^{\top}Ru_t\Bigg\},
\end{equation}
where $\textbf{u}=\{u_0,u_1,u_2,\dots\}$ is the control input sequence, $Q\in\mathbb{R}^{n\times n},R\in\mathbb{R}^{m\times m}$ are positive definite matrices, and the expectation is taken over $\{\lambda_t\}_{t=0}^{+\infty}$.
The objective of the LQR problem is to design a control sequence $\textbf{u}^*$ that mean-square stabilizes the system \eqref{1} and minimizes the cost function $J(x_0,\textbf{u})$ \cite{coppens1}. 
The mean-square stability in this paper is defined as follows.
\begin{definition}
	The System \eqref{1} is mean-square stable if $\lim_{t\to\infty}\mathbb{E}\{x_t^{\top}x_t\}=0$.
\end{definition}

We denote the minimum cost function as $J^*(x_0)$ and the optimal control as $\textbf{u}^*=\{u_0^*,u_1^*,u_2^*,\dots\}$, i.e., $J^*(x_0)=\min_{\textbf{u}}J(x_0,\textbf{u})=J(x_0,\textbf{u}^*)$.
For the infinite horizon LQR problem with Bernoulli packet drop \cite{06optimal,foundation}, the optimal controller depends on prior knowledge of the true packet drop probability $q$, which is 
$u^*_t=Kx_t$, where $K=-(R+B^{\top}PB)^{-1}B^{\top}PA$, and $P$ is a $n$-dimensional positive definite matrix satisfying the modified Riccati equation
\begin{equation}\label{RE1}
    P=Q+A^{\top}PA-(1-q)A^{\top}PB(R+B^{\top}PB)^{-1}B^{\top}PA,
\end{equation}
with the optimal cost $J^*(x_0)=\mathbb{E}[x_0^\top Px_0]$.
To ensure the existence of the positive definite solution to \eqref{RE1}, it is necessary to assume that $(A,B)$ is stabilizable. Moreover, the probability of packet drop cannot exceed a certain critical value \cite{foundation}, which is stated in the following assumption.

\begin{assumption}\rm\label{assump1}
    The packet drop probability $q$ satisfies $q<q_c$,
where $q_c$ satisfies
\begin{equation}\label{range of q_c}\frac{1}{\prod_i|\lambda_i^u(A)|^2}\le q_c\le\frac{1}{\max_i|\lambda_i^u(A)|^2},\end{equation}
and $\lambda_i^u(A)$ denote the unstable eigenvalues of $A$.
\end{assumption}


In Assumption \ref{assump1}, $q_c$ is an upper bound for $q$ to ensure the existence of a positive definite solution $P$ to the equation \eqref{RE1}, which depends on the system matrix $A$ but cannot be expressed in explicit form. It lies within the certain range indicated in \eqref{range of q_c}. Specially, $q_c=1/\max_i|\lambda^u_i(A)|^2$ when $B$ is an invertible matrix; $q_c=1/\prod_i|\lambda_i^u(A)|^2$ when $B$ has rank one \cite{foundation}.


The dependence of $P$ on $q$ indicates that $q$ plays a critical role in obtaining the optimal controller $\textbf{u}^*$. In practice, $q$ may be unknown, and we can only use channel samples to estimate $q$. Denote $\hat{q}$ as the estimated value of $q$. Given a sample sequence $\{\lambda_i,i=1,\dots,N_q\}$ of i.i.d. Bernoulli random variables, $\hat{q}$ is computed as $\hat{q}=\frac{1}{N_q}\sum_{i=1}^{N_q}(1-\lambda_i)$ \cite{unknown channel 7,boss1}. Then, we design the CE optimal controller $\hat{\textbf{u}}=\{\hat{u}_0,\hat{u}_1,\hat{u}_2,\dots\}$ based on $\hat{q}$ (i.e., treat $\hat{q}$ as the true probability $q$) and $\hat{u}_t$ is
\begin{equation}\label{hat u}
\hat{u}_t=\hat{K}x_t,~t=0,1,2,\dots
\end{equation}where
\begin{equation}\label{hat K}
    \hat{K}=-(R+B^{\top}\hat{P}B)^{-1}B^{\top}\hat{P}A,
\end{equation}
and $\hat{P}\in\mathbb{R}^{n\times n}$ is the positive definite solution of the following modified Riccati equation
\begin{equation}\label{RE2}
    \hat{P}=Q+A^{\top}\hat{P}A-(1-\hat{q})A^{\top}\hat{P}B(R+B^{\top}\hat{P}B)^{-1}B^{\top}\hat{P}A.
\end{equation}
The sample size $N_q$ should at least ensure $\hat{q}<q_c$, such that \eqref{RE2} has a positive definite solution $\hat{P}$ used to design $\hat{K}$. 
Although a suitable $N_q$ theoretically guarantees $\hat{q}<q_c$ only in a probabilistic sense (since $\hat{q}$ is random), many existing studies can be used in practice to test whether this condition holds. For some special linear systems, such as those with an input matrix of rank $1$ or full rank, the exact value of $q_c$ can be obtained (as discussed below \Cref{assump1}) and used to directly verify $\hat{q}<q_c$. For general linear systems, a bisection method based on LMI formulation of \eqref{RE2} can be employed to search for $q_c$ (as described in \cite{unknown channel 7}), and subsequently check whether $\hat{q}<q_c$ holds. If it does not, an increase in sample size is required. Therefore, this paper does not further discuss the conditions under which $\hat{q}<q_c$ holds. Instead, we treats $\hat{q}<q_c$ as a premise to investigate the main issues of concern, which will be introduced below.
We will refer to $\hat{K}$ and $\hat{u}_t$ interchangeably as the CE optimal controller.


Since samples are finite, there inevitably exists an estimation error between $q$ and $\hat{q}$. The next subsection will demonstrate that when the error is too large, $\hat{K}$ does not mean-square stabilize the system.

\subsection{Scalar Stability Analysis and Motivating Example}
In this subsection, we consider the scalar case and provide the necessary and sufficient condition for \eqref{hat K} to mean-square stabilize the system \eqref{1} when $n=m=1$. It is straightforward to verify that the necessary and sufficient condition is violated when the estimation error $q-\hat{q}$ is positive and large.

\begin{theorem}\label{thm n and s scalar}
    When $n=m=1$, with the Assumption \ref{assump1} and $\hat{q}<q_c$ holding, \eqref{hat K} mean-square stabilizes the system \eqref{1} if and only if 
   \begin{equation}\label{scalar condition}
    Q+(1-q)R\hat{K}^2+(\hat{q}-q)(R+B^2\hat{P})^{-1}A^2B^2\hat{P}^2>0.
\end{equation}
Moreover, if $\hat{q}\ge q$, \eqref{scalar condition} holds.
\end{theorem}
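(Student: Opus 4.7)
The plan is to convert mean-square stability into a scalar inequality, use the modified Riccati equation to rewrite that inequality in the form displayed in \eqref{scalar condition}, and then observe that the sign of the right-hand side is transparent when $\hat{q}\ge q$.

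First I would set up mean-square stability explicitly in the scalar case. Under $\hat{u}_t=\hat{K}x_t$ the closed loop reads $x_{t+1}=(A+\lambda_t B\hat{K})x_t$, and since $\lambda_t$ is independent of $x_t$, we get $\mathbb{E}[x_{t+1}^2]=\bigl[(1-q)(A+B\hat{K})^2+qA^2\bigr]\mathbb{E}[x_t^2]$. Hence the system is mean-square stable if and only if
\begin{equation*}
(1-q)(A+B\hat{K})^2+qA^2<1.
\end{equation*}
Multiplying by the (strictly positive) scalar $\hat{P}>0$, mean-square stability is equivalent to $\hat{P}-(1-q)(A+B\hat{K})^2\hat{P}-qA^2\hat{P}>0$.

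Next I would derive a Lyapunov identity from the modified Riccati equation \eqref{RE2}. Using $\hat{K}=-B\hat{P}A/(R+B^2\hat{P})$ one verifies $A+B\hat{K}=AR/(R+B^2\hat{P})$ and $\hat{K}^2=A^2B^2\hat{P}^2/(R+B^2\hat{P})^2$, so
\begin{equation*}
(1-\hat{q})\bigl[(A+B\hat{K})^2\hat{P}+R\hat{K}^2\bigr]=(1-\hat{q})\frac{RA^2\hat{P}}{R+B^2\hat{P}},
\end{equation*}
which is exactly the subtraction term in \eqref{RE2}. This yields the Lyapunov-style identity
\begin{equation*}
\hat{P}=Q+(1-\hat{q})R\hat{K}^2+(1-\hat{q})(A+B\hat{K})^2\hat{P}+\hat{q}A^2\hat{P}.
\end{equation*}
Subtracting $(1-q)(A+B\hat{K})^2\hat{P}+qA^2\hat{P}$ from both sides collapses the cross terms into $(\hat{q}-q)\bigl[A^2-(A+B\hat{K})^2\bigr]\hat{P}$, and a direct computation of $A^2-(A+B\hat{K})^2=A^2B^2\hat{P}(2R+B^2\hat{P})/(R+B^2\hat{P})^2$ lets me regroup the result as $Q+(1-q)R\hat{K}^2+(\hat{q}-q)(R+B^2\hat{P})^{-1}A^2B^2\hat{P}^2$. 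Therefore
\begin{equation*}
\hat{P}\bigl[1-(1-q)(A+B\hat{K})^2-qA^2\bigr]=Q+(1-q)R\hat{K}^2+(\hat{q}-q)(R+B^2\hat{P})^{-1}A^2B^2\hat{P}^2,
\end{equation*}
and since $\hat{P}>0$, the positivity of the right-hand side is equivalent to mean-square stability; this is the necessary and sufficient condition \eqref{scalar condition}. The main obstacle is the bookkeeping in this step: splitting $(1-q)R\hat{K}^2$ as $(1-\hat{q})R\hat{K}^2+(\hat{q}-q)R\hat{K}^2$ and recognizing that $R\hat{K}^2+(R+B^2\hat{P})^{-1}A^2B^2\hat{P}^2\cdot(R+B^2\hat{P})/(R+B^2\hat{P})$ reassembles exactly the needed coefficient requires care.

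Finally, for the ``Moreover'' part, Assumption \ref{assump1} and $\hat{q}<q_c$ give $\hat{P}>0$ and $1-q>0$, while $Q,R>0$. When $\hat{q}\ge q$ the term $(\hat{q}-q)(R+B^2\hat{P})^{-1}A^2B^2\hat{P}^2$ is non-negative, so the left-hand side of \eqref{scalar condition} is a sum of non-negative quantities with the strictly positive $Q$ term, hence $>0$. This yields the last claim of the theorem.
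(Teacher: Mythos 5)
Your proof is correct and follows essentially the same route as the paper's: both reduce mean-square stability to the scalar contraction condition $(1-q)(A+B\hat{K})^2+qA^2<1$ and use the modified Riccati equation \eqref{scalar P} to identify $\hat{P}\bigl[1-(1-q)(A+B\hat{K})^2-qA^2\bigr]$ with the left-hand side of \eqref{scalar condition}. One small nit: the quantity $(1-\hat{q})RA^2\hat{P}/(R+B^2\hat{P})$ is not literally the subtraction term of \eqref{RE2} (that term is $(1-\hat{q})A^2B^2\hat{P}^2/(R+B^2\hat{P})$), but the Lyapunov-style identity you derive from it is nonetheless correct, so the argument stands.
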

\begin{proof}
    When $n=m=1$, \eqref{hat K} and \eqref{RE2} can be rewritten as 
    \begin{align}
    &\hat{K}=-(R+B^2\hat{P})^{-1}AB\hat{P}.\label{scalar K}\\
    \label{scalar P}
    &\hat{P}=Q+A^2\hat{P}-(1-\hat{q})(R+B^2\hat{P})^{-1}A^2B^2\hat{P}^2,    
\end{align}
If the scalar system \eqref{1} is mean-square stabilized via \eqref{hat K}, we have $\mathbb{E}\{x_{t}^2\}\to 0$ as $t\to \infty$, which is equivalent to $\mathbb{E}\{\hat{P}x_t^2\}\to0$ as $t\to\infty$, where $\hat{P}$ is a positive real number satisfying \eqref{scalar P}.
Since
$\mathbb{E}\{\hat{P}x_{t+1}^2\}=[A^2+(1-q)(B^2\hat{K}^2+2AB\hat{K})]^{t+1}\mathbb{E}\{\hat{P}x^2_0\}$ and $\lim_{t\to\infty}\mathbb{E}\{\hat{P}x_t^2\}=0$, we have
$A^2+(1-q)(B^2\hat{K}^2+2AB\hat{K})<1$
and $\mathbb{E}\{\hat{P}x_t^2\}$ is monotonically decreasing, which implies that for any $t\ge0$, $\mathbb{E}\{\hat{P}x_{t+1}^2\}<\mathbb{E}\{\hat{P}x_t^2\}$.
On the other hand, we have
\begin{align}
        \mathbb{E}\{\hat{P}x_{t+1}^2\}&=\mathbb{E}\{[A^2+(1-q)(B^2\hat{K}^2+2AB\hat{K})]\hat{P}x_t^2\}\notag\\
        &\overset{(a)}{=}\mathbb{E}\{\hat{P}x_t^2\}-[Q+(1-q)R\hat{K}^2+(\hat{q}-q)\times\notag\\
        &\quad (R+B^2\hat{P})^{-1}A^2B^2\hat{P}^2]\mathbb{E}\{x_t^2\},\label{scalar result}
\end{align}
where $(a)$ follows from \eqref{scalar K},\eqref{scalar P}. Therefore, $\mathbb{E}\{\hat{P}x_{t+1}^2\}<\mathbb{E}\{\hat{P}x_t^2\}$ is equivalent to \eqref{scalar condition}. Moreover, if $\hat{q}\ge q$, \eqref{scalar condition} always holds because all terms are positive.
\end{proof}




 Theorem \ref{thm n and s scalar} implies that $\hat{K}$ mean-square stabilizes the scalar system \eqref{1} when $q-\hat{q}\le0$. In fact, this conclusion also holds for $n$-dimensional systems, which will be proven later. However, when $q-\hat{q}>0$, it is easy to verify that \eqref{scalar condition} may be violated. For example, $q=0.4$, $\hat{q}=0$, $A=1.5$ and $B=Q=R=1$. Therefore, when the estimated error $q-\hat{q}$ is positive and large, $\hat{K}$ is not stabilizing.
Inspired by the above example, this paper focuses on three problems: for $\hat{K}$ to be stabilizing, what is the upper bound of the estimation error $q-\hat{q}$, what is the minimum sample size required, and how to quantify the sub-optimality of $\hat{K}$. The stability analysis up to a certain estimation error $q-\hat{q}$ (discussed in the next section) is important because overestimating the loss probability is conservative, even though it ensures stability.

\section{Stability analysis}\label{section3}
We will first analyze the stability threshold of $q-\hat{q}$ for $n$-dimensional systems, below which $\hat{K}$ is stabilizing. Then, tailored results with less conservatism are given for some specific cases.
\subsection{General Results}
First, we provide the following theorem which guarantees \eqref{hat K} mean-square stabilizes the system \eqref{1}.
\begin{theorem}\rm \label{Pro1}
    With Assumption \ref{assump1} and $\hat{q}<q_c$ holding, \eqref{hat K} mean-square stabilizes the system \eqref{1}, if $q,\hat{q}$ satisfy
    \begin{equation}\label{th1_19}
        Q\!+\!(\!1\!-q)\hat{K}^{\top}\!R\hat{K}\!-\!(\!q\!-\!\hat{q})\!A^{\top}\!\hat{P}\!B(\!R\!+\!B^{\top}\!\hat{P}\!B)^{-1}\!B^{\top}\!\hat{P}\!A\!>\!0
    \end{equation}
    where $\hat{P}$ is the positive definite solution of \eqref{RE2}. Moreover, if $\hat{q}{\ge}q$, \eqref{th1_19} holds.
\end{theorem}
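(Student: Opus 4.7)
The plan is to extend the Lyapunov argument from the scalar proof of Theorem \ref{thm n and s scalar} to the $n$-dimensional setting. Take $V_t := \mathbb{E}\{x_t^\top \hat{P} x_t\}$ with $\hat{P} \succ 0$ the solution of \eqref{RE2}. The closed-loop dynamics read $x_{t+1} = (A + \lambda_t B\hat{K})x_t$, so expanding $x_{t+1}^\top \hat{P} x_{t+1}$ and exploiting both $\lambda_t^2 = \lambda_t$ and $\mathbb{E}\lambda_t = 1-q$ yields
\begin{equation*}
V_{t+1} = \mathbb{E}\left\{x_t^\top \left[A^\top \hat{P} A + (1-q)\,L\right] x_t\right\},
\end{equation*}
where $L := A^\top \hat{P} B\hat{K} + \hat{K}^\top B^\top \hat{P} A + \hat{K}^\top B^\top \hat{P} B\hat{K}$.

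Next, I would use the explicit formula \eqref{hat K} for $\hat{K}$, which gives $A^\top \hat{P} B\hat{K} = \hat{K}^\top B^\top \hat{P} A = -\hat{K}^\top (R+B^\top \hat{P} B)\hat{K}$, so that $L = -\hat{K}^\top R \hat{K} - A^\top \hat{P} B(R+B^\top \hat{P} B)^{-1}B^\top \hat{P} A$. Substituting the Riccati equation \eqref{RE2} to eliminate $A^\top \hat{P} A$ in favor of $\hat{P}$, $Q$, and $A^\top \hat{P} B(R+B^\top \hat{P} B)^{-1}B^\top \hat{P} A$, and then collecting the $(1-\hat{q})$- and $(1-q)$-weighted pieces, the two quadratic forms combine so that a single residual proportional to $(q-\hat{q})$ remains on the term $A^\top \hat{P} B(R+B^\top \hat{P} B)^{-1}B^\top \hat{P} A$. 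This yields the key identity
\begin{equation*}
V_{t+1} - V_t = -\mathbb{E}\left\{x_t^\top M x_t\right\},
\end{equation*}
where $M$ is the matrix on the left-hand side of \eqref{th1_19}.

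With the identity in hand, if $M \succ 0$, set $\epsilon := \lambda_{\min}(M) > 0$, so that $V_{t+1} - V_t \leq -\epsilon\,\mathbb{E}\{x_t^\top x_t\}$. Since $V_t \geq 0$ is monotonically non-increasing, it converges, forcing $V_{t+1} - V_t \to 0$ and hence $\mathbb{E}\{x_t^\top x_t\} \to 0$, which is mean-square stability. For the "moreover" clause, $\hat{q} \geq q$ makes $-(q-\hat{q}) \geq 0$; since $A^\top \hat{P} B(R+B^\top \hat{P} B)^{-1}B^\top \hat{P} A \succeq 0$ (using $R + B^\top \hat{P} B \succ 0$), the third term in \eqref{th1_19} is positive semi-definite, and together with $Q \succ 0$ and $(1-q)\hat{K}^\top R \hat{K} \succeq 0$ the whole left-hand side is positive definite.

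The main obstacle is the algebraic bookkeeping in the middle step: one must track how the $(1-\hat{q})$-weighted and $(1-q)$-weighted quadratic forms on $\hat{K}^\top R\hat{K}$ and on $A^\top \hat{P} B(R+B^\top \hat{P} B)^{-1}B^\top \hat{P} A$ combine so as to leave the clean $(q-\hat{q})$ residual that matches \eqref{th1_19} exactly. Once that identity is pinned down, both the Lyapunov descent conclusion and the sign-based argument for the "moreover" claim follow by elementary steps.
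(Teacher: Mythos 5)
Your proposal is correct and follows essentially the same route as the paper's proof: the same Lyapunov candidate $\mathbb{E}\{x_t^\top\hat{P}x_t\}$, the same substitution of \eqref{hat K} and the Riccati equation \eqref{RE2} to produce the decrement $-\mathbb{E}\{x_t^\top M x_t\}$ with $M$ the left-hand side of \eqref{th1_19}, and the same sign argument for the case $\hat{q}\ge q$. The algebraic identity you flag as the main obstacle does work out exactly as you describe, and your explicit $\lambda_{\min}(M)$ convergence argument is in fact slightly more complete than the paper's appeal to ``Lyapunov theory.''
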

\begin{proof}
    Introducing a Lyapunov function $\hat{V}(x_t)=\mathbb{E}[x_t^{\top}\hat{P}x_t]$, and we can derive that 
    \begin{align}
        \hat{V}(x_{t+1})=&\mathbb{E}\Big\{(Ax_t+\lambda_tB\hat{u}_t)^{\top}\hat{P}(Ax_t+\lambda_tB\hat{u}_t)\Big\}\notag\\
        \overset{(a)}{=}&\mathbb{E}\Big\{x_t^{\top}\big[A^{\top}\hat{P}A-(1-q)\hat{K}^{\top}R\hat{K}-(1-q)\times\notag\\
        &A^{\top}\hat{P}B(R+B^{\top}\hat{P}B)^{-1}B^{\top}\hat{P}A\big]x_t\Big\}\notag\\
        \overset{(b)}{=}&\mathbb{E}\Big\{x_t^{\top}\big[\hat{P}-Q-(1-q)\hat{K}^{\top}R\hat{K}+(q-\hat{q})\times\notag\\
         &A^{\top}\hat{P}B(R+B^{\top}\hat{P}B)^{-1}B^{\top}\hat{P}A\big]x_t\Big\}\notag\\
         \overset{(c)}{<}& \mathbb{E}\{x_t^{\top}\hat{P}x_t\}=\hat{V}(x_t).
    \end{align}
    where $(a)$, $(b)$, $(c)$ follow from \eqref{hat u}, \eqref{RE2}, \eqref{th1_19}, respectively.

Based on Lyapunov theory, we derive $\lim_{t\to\infty} \mathbb{E}\{x^{\top}_tx_t\}=0$, thus proving the mean-square stability of the system \eqref{1}. Moreover, since $Q>0$, $(1-q)\hat{K}^{\top}R\hat{K}>0$, $A^{\top}\hat{P}B(R+B^{\top}\hat{P}B)^{-1}B^{\top}\hat{P}A\ge0$, \eqref{th1_19} always holds if $\hat{q}{\ge}q$.
\end{proof}

The condition \eqref{th1_19} is a generalization of \eqref{scalar condition} from scalar systems to $n$-dimensional systems. \eqref{scalar condition} is a necessary and sufficient condition for $\hat{K}$ to stabilize the system in the scalar case. However, \eqref{th1_19} is only a sufficient condition for $n$-dimensional systems, as $\mathbb{E}\{x^{\top}_{t+1}x_{t+1}\}<\mathbb{E}\{x^{\top}_{t}x_{t}\}$ is not equivalent to $\mathbb{E}\{x^{\top}_{t+1}\hat{P}x_{t+1}\}<\mathbb{E}\{x^{\top}_{t}\hat{P}x_{t}\}$ for a given $\hat{P}$.


A larger $q$ represents a more hostile communication environment. Hence, the second conclusion of Theorem \ref{Pro1} indicates that conservative controllers synthesized with overestimated loss rates (i.e., $\hat{q}>q$) inherently guarantee closed-loop stability, albeit at the expense of increased sub-optimality (validated in \Cref{section5}).
Then, we focus on the stability threshold for the estimation error when $\hat{q}<q$.
For the sake of clarity, we define the \textit{Stability Threshold} (ST) as 
\begin{equation}\label{MST definition}
    \bar{\delta}=\max\Big\{\delta\in\mathbb{R}_+\Big|\mathcal{C}(q,\hat{q})>0,~ \forall (q,\hat{q})\in\Omega_{\delta}\Big\}.
\end{equation}
where $\mathcal{C}(q,\hat{q})=Q+(1-q)\hat{K}^{\top}R\hat{K}-(q-\hat{q})A^{\top}\hat{P}B(R+B^{\top}\hat{P}B)^{-1}B^{\top}\hat{P}A$, i.e., the left-hand side of \eqref{th1_19}, and $\Omega_{\delta}=\big\{(q,\hat{q})\in\mathbb{R}\times\mathbb{R}\big|q-\hat{q}<\delta,~0\le q,\hat{q}<q_c\big\}$.

From the definition \eqref{MST definition}, we know that when $q-\hat{q}<\bar{\delta}$, $\hat{K}$ mean-square stabilizes the system according to Theorem \ref{Pro1}. 
We aim to determine the ST for a given LQR problem \eqref{1},\eqref{lambda distribution},\eqref{2}, i.e., given $A,B,Q,R,q$. Because ST may depend on all these parameters, it is appropriate to denote it as $\bar{\delta}(A,B,Q,R,q)$. And we abbreviate it as $\bar{\delta}(q)$, since $q$ is the essential cause of the estimation error.
In fact, it is challenging to analytically express the ST $\bar{\delta}(q)$ in terms of the given parameters $A,B,Q,R,q$, even in the scalar case. Therefore, we will provide a lower bound for the ST. This lower bound can be expressed explicitly based on the given parameters and ensures that $\hat{K}$ is stabilizing when $q-\hat{q}$ is smaller than this lower bound. We first show a useful lemma.
\begin{lemma}\label{lemma 1}\rm
    With Assumption \ref{assump1} and $\hat{q}<q_c$ holding, given $\hat{q}<q$, then the corresponding matrices $P$ and $\hat{P}$ in \eqref{RE1} and \eqref{RE2} satisfy $\hat{P}\le P$.
\end{lemma}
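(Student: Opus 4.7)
The plan is to establish the monotonicity of the positive-definite solution of the modified Riccati equation with respect to the loss rate. For a parameter $p\in[0,q_c)$ and $X\succeq 0$, introduce the Riccati operator
\[
\Phi_p(X)=Q+A^{\top}XA-(1-p)A^{\top}XB(R+B^{\top}XB)^{-1}B^{\top}XA,
\]
so that $P$ and $\hat P$ are the positive definite fixed points of $\Phi_q$ and $\Phi_{\hat q}$, respectively. The goal is to show (i) $\Phi_p$ is monotone nondecreasing in $X$ on the positive semidefinite cone, and (ii) for each fixed $X\succeq 0$, $\Phi_p(X)$ is monotone nondecreasing in $p$.

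For (ii) a direct subtraction gives
\[
\Phi_q(X)-\Phi_{\hat q}(X)=(q-\hat q)\,A^{\top}XB(R+B^{\top}XB)^{-1}B^{\top}XA\succeq 0
\]
whenever $\hat q<q$ and $X\succeq 0$. For (i) the cleanest route is the variational form used to derive \eqref{RE2}: completing the square one shows
\[
\Phi_p(X)=\min_{K}\Bigl\{Q+(1-p)K^{\top}RK+(1-p)(A+BK)^{\top}X(A+BK)+p\,A^{\top}XA\Bigr\}.
\]
Because the bracketed expression is monotone nondecreasing in $X$ for every fixed $K$, its pointwise minimum $\Phi_p(X)$ is also monotone nondecreasing in $X$.

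With both monotonicities in hand, I would run parallel value iterations $P_0=\hat P_0=0$, $P_{k+1}=\Phi_q(P_k)$, $\hat P_{k+1}=\Phi_{\hat q}(\hat P_k)$, and prove $\hat P_k\preceq P_k$ by induction: if $\hat P_k\preceq P_k$ then
\[
P_{k+1}=\Phi_q(P_k)\succeq \Phi_q(\hat P_k)\succeq \Phi_{\hat q}(\hat P_k)=\hat P_{k+1},
\]
using (i) and (ii) successively. Both iterations start at $0$, are monotone nondecreasing (this follows from $\Phi_p(0)=Q\succeq 0=P_0$ and monotonicity in $X$), and are upper bounded by the respective fixed points $P$ and $\hat P$. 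Passing to the limit yields $\hat P\preceq P$.

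The main obstacle is justifying that the limits of the two value iterations are precisely the positive definite solutions $P$ and $\hat P$ appearing in \eqref{RE1}, \eqref{RE2}. Under Assumption \ref{assump1} and $\hat q<q_c$, the MARE admits a unique positive definite solution, and the monotone bounded sequence $\{P_k\}$ must converge to \emph{some} fixed point of $\Phi_q$; by uniqueness in the stabilizing regime this fixed point coincides with $P$, and analogously for $\hat P$. This uniqueness/convergence fact is a standard result on modified algebraic Riccati equations and can be invoked by citing the relevant literature rather than reproved in detail.
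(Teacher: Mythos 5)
Your proposal is correct and follows essentially the same route as the paper: both compare parallel Riccati value iterations, use monotonicity of the operator in $X$ and in the loss-rate parameter to propagate $\hat P_k\preceq P_k$ by induction, and pass to the limit using the standard convergence result for the modified Riccati equation (which the paper cites from Sinopoli et al. rather than reproving). The only cosmetic differences are that you prove the two monotonicity facts directly (via the variational form and direct subtraction) where the paper cites them, and you initialize at $0$ instead of an arbitrary positive definite matrix.
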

\begin{proof}
    Let's define $g(X)\triangleq Q+A^{\top}XA-(1-q)A^{\top}XB(R+B^{\top}XB)^{-1}B^{\top}XA$, $\hat{g}(X)\triangleq Q+A^{\top}XA-(1-\hat{q})A^{\top}XB(R+B^{\top}XB)^{-1}B^{\top}XA$, where $X\in\mathbb{R}^{n\times n}$. 
    Given any matrix $\mathcal{P}_0>0$ as the initial matrix and denoting  $P_{t+1}=g(P_t),~\hat{P}_{t+1}=g(\hat{P}_t)$, we know from \cite{kalman2005} that $P=\lim_{t\to\infty}P_t=\lim_{t\to\infty}g^t(\mathcal{P}_0)$, $\hat{P}=\lim_{t\to\infty}\hat{P}_t=\lim_{t\to\infty}\hat{g}^t(\mathcal{P}_0)$.

    When $\hat{q}<q$, i.e. $1-\hat{q}>1-q$, we know from c) of Lemma 1 in \cite{kalman2005} that $\hat{P}_1=\hat{g}(\mathcal{P}_0)\le g(\mathcal{P}_0)=P_1$.
    And from d) of Lemma 1 in \cite{kalman2005}, we have $       \hat{g}(\hat{P}_1)\le g(\hat{P}_1)\le g(P_1)$, which shows $\hat{P}_2\le P_2$.
By induction, there is $\hat{P}_t\le P_t$ for $\forall t$. Let $t\to\infty$, we have $\hat{P}\le P$.
\end{proof}

Then, the lower bound of ST is given as follows.

\begin{theorem}\rm\label{Th2}
    With Assumption \ref{assump1} and $\hat{q}<q_c$ holding, for the given system \eqref{1}, the ST satisfies 
    \begin{equation}\label{th2_35}
        \bar{\delta}(q)\ge \frac{\lambda_{\min}\{Q^{\frac{1}{2}}(A^{\top}P^2A)^{-1}Q^{\frac{1}{2}}\}}{c_1},
    \end{equation}
    where $c_1=\frac{\lambda_{\max}(BB^{\top})}{\lambda_{\min}(R+B^{\top}P_0B)}$, $P$ is the positive definite solution of \eqref{RE1}, $P_0$ is the positive definite solution of the following Riccati equation
\begin{equation}\label{th2_35(2)}
  P_0=Q+A^{\top}P_0A-A^{\top}P_0B(R+B^{\top}P_0B)^{-1}B^{\top}P_0A.
\end{equation}
\end{theorem}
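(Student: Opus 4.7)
The plan is to convert the sufficient condition of \Cref{Pro1} into an explicit scalar bound on $q - \hat{q}$. Since $Q > 0$ and $(1-q)\hat{K}^{\top}R\hat{K} \ge 0$, a stronger sufficient condition for \eqref{th1_19} is
\[
Q > (q - \hat{q})\, M, \qquad M := A^{\top}\hat{P} B (R + B^{\top}\hat{P}B)^{-1} B^{\top}\hat{P} A.
\]
Using the standard equivalence $Q > \alpha N \iff \alpha < \lambda_{\min}(Q^{1/2} N^{-1} Q^{1/2})$ for scalars $\alpha>0$ and matrices $N>0$, this reduces to $q - \hat{q} < \lambda_{\min}(Q^{1/2} M^{-1} Q^{1/2})$.

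Next I would bound $M$ above by a quantity depending only on the true parameters and $P$. The inequality
\[
B(R + B^{\top}\hat{P}B)^{-1} B^{\top} \le \frac{\lambda_{\max}(BB^{\top})}{\lambda_{\min}(R + B^{\top}\hat{P}B)}\, \mathcal{I},
\]
combined with \Cref{lemma 1} applied with the ``true'' probability $\hat{q}$ and a smaller reference probability $0$ (which yields $P_0 \le \hat{P}$, so $\lambda_{\min}(R+B^{\top}P_0 B) \le \lambda_{\min}(R+B^{\top}\hat{P}B)$), gives $M \le c_1 A^{\top}\hat{P}^2 A$. A second invocation of \Cref{lemma 1}, now with $\hat{q} < q$, delivers $\hat{P} \le P$, and I would propagate this to conclude
\[
\lambda_{\min}(Q^{1/2}(A^{\top}\hat{P}^2 A)^{-1} Q^{1/2}) \ge \lambda_{\min}(Q^{1/2}(A^{\top}P^2 A)^{-1} Q^{1/2}).
\]
Chaining the three bounds: if $q - \hat{q} < \lambda_{\min}(Q^{1/2}(A^{\top}P^2 A)^{-1} Q^{1/2})/c_1$, then $(q-\hat{q}) c_1 A^{\top} P^2 A < Q$, hence $(q-\hat{q}) M < Q$, hence $\mathcal{C}(q,\hat{q})>0$ and closed-loop stability follows from \Cref{Pro1}. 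The definition \eqref{MST definition} of $\bar{\delta}(q)$ then yields \eqref{th2_35}.

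The step I expect to require the most care is the transition from $A^{\top}\hat{P}^2 A$ to $A^{\top}P^2 A$. Although \Cref{lemma 1} gives $\hat{P} \le P$, the squaring map is not operator-monotone on the PSD cone, so $\hat{P}^2 \le P^2$ cannot be invoked directly as a matrix inequality. The resolution should exploit the scalar consequence $\lambda_{\max}(\hat{P}) \le \lambda_{\max}(P)$ that $\hat{P} \le P$ does yield, and then argue at the level of the $Q$-normalized spectral radii $\lambda_{\max}(Q^{-1/2} A^{\top}\hat{P}^2 A Q^{-1/2})$ versus $\lambda_{\max}(Q^{-1/2} A^{\top}P^2 A Q^{-1/2})$, rather than attempting a full matrix ordering. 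Aside from this, the remaining pieces are routine spectral estimates and a pair of applications of \Cref{lemma 1} with carefully chosen reference probabilities.
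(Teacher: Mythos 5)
Your derivation retraces the paper's proof of \Cref{Th2} essentially step for step: discard $(1-q)\hat{K}^{\top}R\hat{K}\ge 0$, bound $B(R+B^{\top}\hat{P}B)^{-1}B^{\top}\le \frac{\lambda_{\max}(BB^{\top})}{\lambda_{\min}(R+B^{\top}P_0B)}\mathcal{I}$ via $P_0\le\hat{P}$ from \Cref{lemma 1}, and then pass from $A^{\top}\hat{P}^2A$ to $A^{\top}P^2A$. You are right that this last passage is the crux and that it does not follow from $\hat{P}\le P$ by operator monotonicity (the paper performs exactly this replacement without comment). The problem is that your proposed repair does not close the gap. Since $Q^{-1/2}A^{\top}X^2AQ^{-1/2}=(XAQ^{-1/2})^{\top}(XAQ^{-1/2})$, the inequality you need is $\|\hat{P}AQ^{-1/2}\|_2\le\|PAQ^{-1/2}\|_2$, and $0<\hat{P}\le P$ does not imply $\|\hat{P}z\|\le\|Pz\|$ for a given vector $z$: with $\hat{P}=\left(\begin{smallmatrix}1&1\\1&1\end{smallmatrix}\right)+\epsilon\mathcal{I}$, $P=\left(\begin{smallmatrix}2&1\\1&1\end{smallmatrix}\right)+\epsilon\mathcal{I}$ and $z=(-c,1)^{\top}$, one computes $\|\hat{P}z\|^2-\|Pz\|^2\approx c(2-3c)>0$ for small $c,\epsilon>0$. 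The scalar consequence $\lambda_{\max}(\hat{P})\le\lambda_{\max}(P)$ is too weak to control the action of $\hat{P}$ along the directions selected by $AQ^{-1/2}$, so the ``$Q$-normalized spectral radius'' comparison you invoke is not a valid deduction from \Cref{lemma 1}.

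A chain that is actually licensed by operator monotonicity of $X\mapsto A^{\top}XA$ is $A^{\top}\hat{P}^2A\le\lambda_{\max}(\hat{P})\,A^{\top}\hat{P}A\le\lambda_{\max}(P)\,A^{\top}PA\le\frac{\lambda_{\max}(P)}{\lambda_{\min}(P)}A^{\top}P^2A$, but this degrades the constant by the condition number of $P$ and hence proves a weaker bound than \eqref{th2_35}. As written, your argument (like the paper's) establishes \eqref{th2_35} only when $A^{\top}\hat{P}^2A\le A^{\top}P^2A$ can be separately justified — e.g., in the scalar case or when $\hat{P}$ and $P$ commute. So the step you flagged as ``requiring care'' is a genuine unresolved gap in the proposal, not a routine verification.
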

\begin{proof}    
For any $q$ and $\hat{q}$ satisfying $0<q-\hat{q}<\lambda_{\min}\{Q^{\frac{1}{2}}(A^{\top}P^2A)^{-1}Q^{\frac{1}{2}}\}/{c_1}$,
we know $P_0\le\hat{P}\le P$ from Lemma \ref{lemma 1} and $Q-(q-\hat{q})c_1A^{\top}P^2A> 0$, where $\hat{P}$ satisfies \eqref{RE2}. Then, we have
    \begin{equation}
        \begin{aligned}
            &Q\!+\!(1\!-\!q)\hat{K}^{\top}\!R\hat{K}\!-\!(q\!-\!\hat{q})A^{\top}\!\hat{P}B(\!R\!+\!B^{\top}\!\hat{P}\!B)^{-1}\!B^{\top}\!\hat{P}\!A\\
            \ge& Q-(q-\hat{q})\frac{\lambda_{\max}(BB^{\top})}{\lambda_{\min}(R+B^{\top}P_0B)}A^{\top}\hat{P}^2A\\
            \ge& Q-(q-\hat{q})c_1A^{\top}P^2A> 0.
        \end{aligned}
    \end{equation}
 According to Theorem \ref{Pro1}, \eqref{hat K} mean-square stabilizes \eqref{1}. 
\end{proof}
\begin{remark}\label{the impact of parameter}
    If $q$ increases, the upper bound on $q-\hat{q}$ becomes more restrictive due to the increase of $P$ with respect to $q$. In particular, if $q$ approaches $q_c$, the upper bound will approach $0$. Therefore, in poorer communication channels, the estimate should be more accurate to ensure $\hat{K}$ is stabilizing. As shown in Fig. \ref{th3 stabilizing region}, the closer $q$ is to $q_c$, the smaller the range of stabilizable $\hat{q}$, where the blue area is the stabilizing region of $\hat{q}$. 
\end{remark}
\begin{figure}[h]
	\centering
	\subfigure[Stabilizing region for \Cref{ex 1}.]{\includegraphics[width=0.23\textwidth]{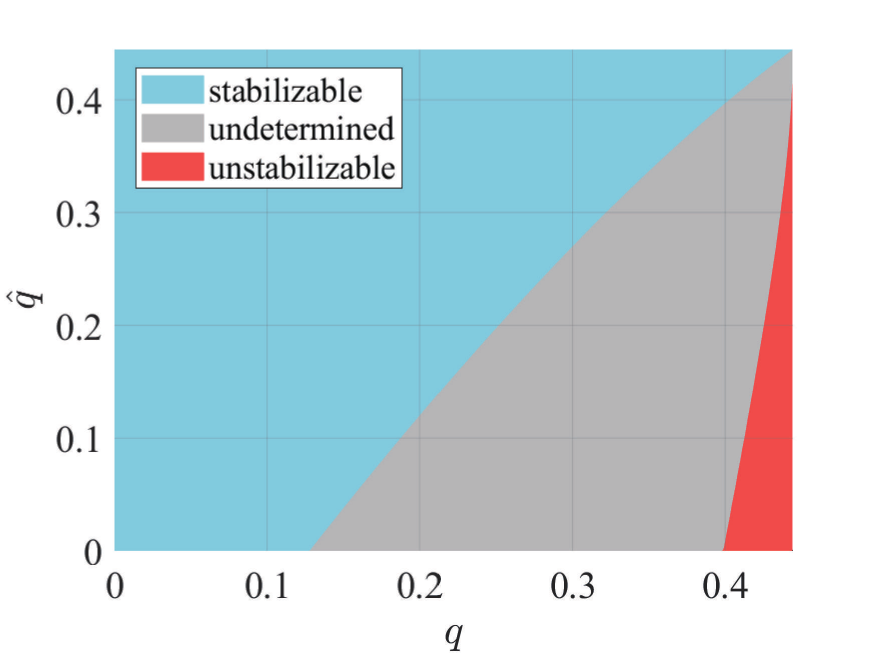}
		\label{th3 scalar hat_q--q}}
	\subfigure[Stabilizing region for \Cref{exa 2}.]{
		\label{th3 vector hat_q--q}
		\includegraphics[width=0.23\textwidth]{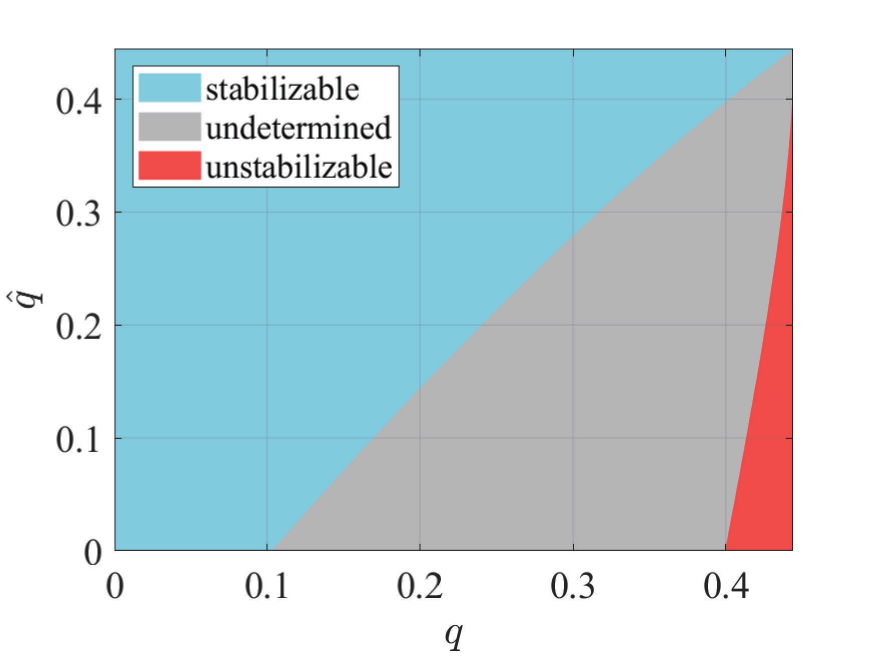}}
	\caption{The stabilizing regions of $\hat{q}$ distinguished by \Cref{Th2} are represented in blue, where \Cref{ex 1}, \Cref{exa 2} are given in \Cref{section6}.}
	\label{th3 stabilizing region}
\end{figure}

From Theorem \ref{Th2}, we also know that when $q$ is smaller than the lower bound of the ST, \eqref{hat K} designed with any $\hat{q}\in[0,q_c)$ mean-square stabilizes the system \eqref{1}, as shown below.
\begin{corollary}\label{cor2}\rm
    With Assumption \ref{assump1} and $\hat{q}<q_c$ holding, for any $\hat{q}\in[0,q_c)$, \eqref{hat K} mean-square stabilizes the system \eqref{1} if $q$ satisfies $q< \lambda_{\min}\{Q^{\frac{1}{2}}(A^{\top}P^2A)^{-1}Q^{\frac{1}{2}}\}/{c_1}$, where the definitions of $c_1,P,P_0$ are the same as those in Theorem \ref{Th2}.
\end{corollary}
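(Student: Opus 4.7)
The plan is to observe that Corollary \ref{cor2} follows directly by combining the two stabilizing conditions already established: Theorem \ref{Pro1} for the overestimated case and Theorem \ref{Th2} for the underestimated case. Since the hypothesis is a bound on $q$ alone (not on $q-\hat{q}$), I would split the argument according to the sign of $q-\hat{q}$.

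First I would consider the case $\hat{q}\ge q$. By the second conclusion of Theorem \ref{Pro1}, condition \eqref{th1_19} is automatically satisfied because $Q>0$, $(1-q)\hat{K}^\top R\hat{K}\ge 0$, and the remaining term $-(q-\hat{q})A^\top\hat{P}B(R+B^\top\hat{P}B)^{-1}B^\top\hat{P}A$ is nonnegative. Hence $\hat{K}$ mean-square stabilizes the system \eqref{1} without invoking any upper bound on $q$.

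Next I would treat the case $\hat{q}<q$. Since $\hat{q}\in[0,q_c)$ implies $\hat{q}\ge 0$, we have
\begin{equation*}
0<q-\hat{q}\le q<\frac{\lambda_{\min}\{Q^{\frac{1}{2}}(A^{\top}P^2A)^{-1}Q^{\frac{1}{2}}\}}{c_1},
\end{equation*}
where the last inequality is the standing hypothesis of the corollary. This places $(q,\hat{q})$ exactly in the regime covered by Theorem \ref{Th2}; in particular, the proof of that theorem shows $\mathcal{C}(q,\hat{q})>0$, so condition \eqref{th1_19} holds and Theorem \ref{Pro1} delivers mean-square stability. Combining the two cases completes the argument for every admissible $\hat{q}$.

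I do not foresee a technical obstacle here: both ingredients (the monotonicity $\hat{P}\le P$ from Lemma \ref{lemma 1} and the spectral estimate used in Theorem \ref{Th2}) are already in place, and the only nontrivial observation is the elementary inequality $q-\hat{q}\le q$ that makes the uniform bound on $q$ sufficient to cover the worst-case $\hat{q}=0$. The only point worth being careful about is the boundary case $\hat{q}=q$, which is absorbed into the $\hat{q}\ge q$ branch.
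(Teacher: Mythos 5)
Your argument is correct and matches the paper's intended (unwritten) justification: the corollary is stated as an immediate consequence of Theorem \ref{Th2}, using exactly the elementary observation $q-\hat{q}\le q$ for $\hat{q}\ge 0$, together with Theorem \ref{Pro1} for the case $\hat{q}\ge q$. No gaps.
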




Theorem \ref{Th2} provides a lower bound of the ST for $n$-dimensional linear systems. Moreover, tighter lower bounds can be obtained for scalar systems and $n$-dimensional systems with invertible input matrix. 

\subsection{Tailored Results for Specific Classes of Systems}

First, the lower bound of ST is given for scalar systems.

\begin{theorem}\rm\label{scalar th3}
    With Assumption \ref{assump1} and $\hat{q}<q_c$ holding, for the given system \eqref{1}, supposing $n=m=1$, the ST satisfies 
    \begin{equation}\label{scalar low bound}
        \bar{\delta}(q)\ge \frac{Q(R+B^2P)}{A^2B^2P^2}+\frac{(1-q)R}{R+B^2P},
    \end{equation}
    where $P$ is the positive definite solution of \eqref{RE1}.
\end{theorem}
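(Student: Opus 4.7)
The plan is to translate the scalar necessary-and-sufficient stability condition from \Cref{thm n and s scalar} into an explicit upper bound on $q-\hat{q}$ and then exploit \Cref{lemma 1} to replace every occurrence of the a priori unknown $\hat{P}$ by the known $P$. In the scalar case, $\mathcal{C}(q,\hat{q})$ in \eqref{MST definition} coincides with the left-hand side of \eqref{scalar condition}, so $\bar{\delta}(q)$ is exactly the supremum of $\delta$ for which \eqref{scalar condition} holds whenever $q-\hat{q}<\delta$ and $\hat{q}\in[0,q_c)$. The case $\hat{q}\ge q$ is already handled by the second assertion of \Cref{thm n and s scalar}, so only the regime $\hat{q}<q$ requires work.

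In that regime, substituting the closed-form $\hat{K}^2=A^2B^2\hat{P}^2/(R+B^2\hat{P})^2$ from \eqref{scalar K} into \eqref{scalar condition} and isolating $q-\hat{q}$ recasts the stability condition equivalently as
\begin{equation*}
q-\hat{q}<\frac{Q(R+B^2\hat{P})}{A^2B^2\hat{P}^2}+\frac{(1-q)R}{R+B^2\hat{P}}.
\end{equation*}
The right-hand side still depends on $\hat{P}$, which is itself a function of the unknown $\hat{q}$. To eliminate this dependence I would regard the right-hand side as a function of a positive real variable $X$ obtained by replacing $\hat{P}$, and verify via a short calculus check that each summand, $Q(R+B^2X)/(A^2B^2X^2)$ and $(1-q)R/(R+B^2X)$, is strictly decreasing in $X$.

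Combining this monotonicity with \Cref{lemma 1}, which provides $\hat{P}\le P$ precisely in the regime $\hat{q}<q$, yields
\begin{equation*}
\frac{Q(R+B^2\hat{P})}{A^2B^2\hat{P}^2}+\frac{(1-q)R}{R+B^2\hat{P}}\ge\frac{Q(R+B^2P)}{A^2B^2P^2}+\frac{(1-q)R}{R+B^2P}.
\end{equation*}
Hence, whenever $q-\hat{q}$ is below the right-hand side of \eqref{scalar low bound}, the reformulated stability condition holds, giving $\mathcal{C}(q,\hat{q})>0$ and the claimed lower bound on $\bar{\delta}(q)$. The one non-routine step is establishing the monotonicity of $X\mapsto Q(R+B^2X)/(A^2B^2X^2)$, since the $(R+B^2X)$ factor in the numerator fights against the $X^{-2}$ factor; once the sign of its derivative is pinned down, the remainder is a direct algebraic rearrangement plus a single application of \Cref{lemma 1}.
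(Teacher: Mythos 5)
Your proposal is correct and follows essentially the same route as the paper's proof: rearrange \eqref{scalar condition} into the equivalent bound $q-\hat{q}<\frac{Q(R+B^2\hat{P})}{A^2B^2\hat{P}^2}+\frac{(1-q)R}{R+B^2\hat{P}}$, observe that the right-hand side is a decreasing function of $\hat{P}$, and invoke \Cref{lemma 1} to replace $\hat{P}$ by $P$. The monotonicity step you flagged as non-routine is immediate once you write $\frac{Q(R+B^2X)}{A^2B^2X^2}=\frac{Q R}{A^2B^2X^2}+\frac{Q}{A^2X}$, each summand being decreasing in $X>0$.
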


\begin{proof}
Through algebraic manipulation, we derive that \eqref{scalar condition} in Theorem \ref{thm n and s scalar} is equivalent to $q-\hat{q}<\frac{Q(R+B^2\hat{P})}{A^2B^2\hat{P}^2}+\frac{(1-q)R}{R+B^2\hat{P}}$.
Define $f(X)\triangleq\frac{Q(R+B^2X)}{A^2B^2X^2}+\frac{(1-q)R}{R+B^2X}$
where $X\in\mathbb{R}_+$. Since $Q,R>0$, $f(X)$ is monotonically decreasing with respect to $X\in\mathbb{R}_+$. From Lemma \ref{lemma 1}, we know $f(\hat{P})\ge f(P)$ if $\hat{q}<q$. Hence, when $0\le q-\hat{q}\le \frac{Q(R+B^2P)}{A^2B^2P^2}+\frac{(1-q)R}{R+B^2P}$, \eqref{scalar condition} holds.
\end{proof}
\begin{remark}\rm
    In the scalar case, Theorem \ref{scalar th3} is less conservative than Theorem \ref{Th2}. In fact, for the scalar case, the right-hand side (RHS) of \eqref{th2_35} can be rewritten as $\frac{Q(R+B^2P_0)}{A^2B^2P^2}$ which is clearly smaller than the RHS of \eqref{scalar low bound}.
Moreover, from Theorem \ref{scalar th3}, when $q<\frac{Q(R+B^2P)}{A^2B^2P^2}+\frac{(1-q)R}{R+B^2P}$, \eqref{hat K} designed with any $\hat{q}\in[0,1/\rho(A)^2)$ mean-square stabilizes the system \eqref{1}.
\end{remark}


Next, the lower bound of ST is provided for $n$-dimensional systems with invertible $B$.
\begin{theorem}\rm\label{Th 3}
    With Assumption \ref{1} and $\hat{q}<q_c$ holding, for the given system \eqref{1}, supposing $B$ is invertible, the ST satisfies
\begin{equation}\label{32}
    \bar{\delta}(q)\ge \lambda_{\min}\{\Xi(A^{\top}PA)^{-1}\Xi\},
\end{equation}
where $\Xi=(Q+(1-q)c_2A^{\top}P_0^2A)^{\frac{1}{2}}$,
$c_2=\frac{\lambda_{\min}(R)\lambda_{\min}(BB^{\top})}{\lambda_{\max}(R+B^{\top}PB)^2}$, $P$, $P_0$ are defined as in Theorem \ref{Th2}. 

\end{theorem}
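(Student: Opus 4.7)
The plan is to mirror the argument of \Cref{Th2} while exploiting the invertibility of $B$ to sharpen both estimates in the Lyapunov condition \eqref{th1_19}. By \Cref{Pro1}, it suffices to exhibit a threshold $\delta^{\star}$ such that $q-\hat{q}<\delta^{\star}$ forces $\mathcal{C}(q,\hat{q})>0$, and the goal is to identify $\delta^{\star}$ with $\lambda_{\min}\{\Xi(A^{\top}PA)^{-1}\Xi\}$. All the new content over \Cref{Th2} comes from two invertibility-driven refinements: a tighter lower bound on $(1-q)\hat{K}^{\top}R\hat{K}$ and a tighter upper bound on the subtracted term.

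To lower bound $(1-q)\hat{K}^{\top}R\hat{K}$, I would set $S:=R+B^{\top}\hat{P}B$ and rewrite $\hat{K}^{\top}R\hat{K}=A^{\top}\hat{P}BS^{-1}RS^{-1}B^{\top}\hat{P}A$. Using $\hat{P}\le P$ from \Cref{lemma 1} gives $S\le R+B^{\top}PB$ and hence $S^{-1}\ge [\lambda_{\max}(R+B^{\top}PB)]^{-1}\mathcal{I}$; combining this with $R\ge \lambda_{\min}(R)\mathcal{I}$ and the congruence estimate $\hat{P}^{1/2}BB^{\top}\hat{P}^{1/2}\ge \lambda_{\min}(BB^{\top})\hat{P}$ yields $\hat{K}^{\top}R\hat{K}\ge c_{2}A^{\top}\hat{P}^{2}A$. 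Invoking the monotonicity in \Cref{lemma 1} with reference probability $0$ gives $P_{0}\le \hat{P}$, from which I would extract $A^{\top}\hat{P}^{2}A\ge A^{\top}P_{0}^{2}A$, yielding $Q+(1-q)\hat{K}^{\top}R\hat{K}\ge \Xi^{2}$. For the subtracted $(q-\hat{q})$ term, invertibility of $B$ makes $B^{\top}\hat{P}B$ positive definite and gives $(R+B^{\top}\hat{P}B)^{-1}\le (B^{\top}\hat{P}B)^{-1}=B^{-1}\hat{P}^{-1}B^{-\top}$; sandwiching by $B$ and $B^{\top}$ collapses this to $B(R+B^{\top}\hat{P}B)^{-1}B^{\top}\le \hat{P}^{-1}$, so that $A^{\top}\hat{P}B(R+B^{\top}\hat{P}B)^{-1}B^{\top}\hat{P}A\le A^{\top}\hat{P}A\le A^{\top}PA$, where the final step uses $\hat{P}\le P$ once more.

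Assembling the two estimates I obtain $\mathcal{C}(q,\hat{q})\ge \Xi^{2}-(q-\hat{q})A^{\top}PA$, and positive-definiteness of the right-hand side is equivalent to $(q-\hat{q})\Xi^{-1}(A^{\top}PA)\Xi^{-1}<\mathcal{I}$, i.e.\ $q-\hat{q}<\lambda_{\min}\{\Xi(A^{\top}PA)^{-1}\Xi\}$, at which point \Cref{Pro1} delivers mean-square stability. The main obstacle I anticipate is the one step $A^{\top}\hat{P}^{2}A\ge A^{\top}P_{0}^{2}A$ extracted from $\hat{P}\ge P_{0}$: since squaring is not L\"owner-monotone in general, this implication does not follow mechanically and will need either a direct eigenvalue argument leveraging $\lambda_{\min}(\hat{P})\ge \lambda_{\min}(P_{0})$ or a dedicated monotonicity property of the Riccati-indexed family $\{\hat{P}(\hat{q})\}$; the remaining two estimates are clean consequences of the invertibility of $B$ and of \Cref{lemma 1}.
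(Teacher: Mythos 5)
Your route is essentially the paper's. Both proofs lower-bound $(1-q)\hat{K}^{\top}R\hat{K}$ by $(1-q)c_2A^{\top}\hat{P}^{2}A$, both upper-bound the subtracted term by $(q-\hat{q})A^{\top}\hat{P}A\le(q-\hat{q})A^{\top}PA$ (the paper reaches this through the identity $B(R+B^{\top}\hat{P}B)^{-1}B^{\top}=[(BR^{-1}B^{\top})^{-1}+\hat{P}]^{-1}$, valid for invertible $B$, together with $\hat{P}^{\frac{1}{2}}[(BR^{-1}B^{\top})^{-1}+\hat{P}]^{-1}\hat{P}^{\frac{1}{2}}\le\mathcal{I}$; you reach it by discarding $R$ and cancelling $B$ against $(B^{\top}\hat{P}B)^{-1}$ --- the two are interchangeable), and both reduce $\Xi^{2}-(q-\hat{q})A^{\top}PA>0$ to $q-\hat{q}<\lambda_{\min}\{\Xi(A^{\top}PA)^{-1}\Xi\}$ before invoking \Cref{Pro1}.

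The step you flag --- deducing $A^{\top}\hat{P}^{2}A\ge A^{\top}P_{0}^{2}A$ from $P_{0}\le\hat{P}$ --- is indeed the weak point, and the paper does not repair it: its inequality $(a)$ cites only \Cref{lemma 1} for exactly this passage. Since $X\mapsto X^{2}$ is not operator monotone, $P_{0}\le\hat{P}$ does not in general imply $P_{0}^{2}\le\hat{P}^{2}$, and there is no commutativity between $P_{0}$ and $\hat{P}$ to rescue the implication; the same unjustified squaring also appears in the proof of \Cref{Th2} (where $\hat{P}\le P$ is used to conclude $A^{\top}\hat{P}^{2}A\le A^{\top}P^{2}A$). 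The eigenvalue-based repair you anticipate does work but weakens the constant: from $\hat{P}^{2}\ge\lambda_{\min}(\hat{P})\hat{P}\ge\lambda_{\min}(P_{0})P_{0}$ one gets $A^{\top}\hat{P}^{2}A\ge\lambda_{\min}(P_{0})A^{\top}P_{0}A$, which would require redefining $\Xi$ as $(Q+(1-q)c_{2}\lambda_{\min}(P_{0})A^{\top}P_{0}A)^{\frac{1}{2}}$ rather than using $A^{\top}P_{0}^{2}A$. In short: your proposal matches the paper's argument and is correct except for a gap that you have correctly identified and that the published proof shares.
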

\begin{proof}
    When $0<q-\hat{q}<\lambda_{\min}\{\Xi(A^{\top}PA)^{-1}\Xi\}$, we have
\begin{align}
    &Q\!+\!(1\!-\!q)\hat{K}^{\top}\!R\hat{K}\!-\!(q\!-\!\hat{q})A^{\top}\hat{P}B(R\!+\!B^{\top}\hat{P}B)^{-1}B^{\top}\hat{P}A\notag\\
    \ge&Q+(1-q)\frac{\lambda_{\min}(R)\lambda_{\min}(BB^{\top})}{\lambda_{\max}(R+BPB^{\top})^2}A^{\top}\hat{P}^2A-(q-\hat{q})\times\notag\\
    &A^{\top}\hat{P}[(BR^{-1}B^{\top})^{-1}+\hat{P}]^{-1}\hat{P}A\notag\\
    \overset{(a)}{\ge}&Q+(1-q)c_2A^{\top}P_0^2A-(q-\hat{q})A^{\top}PA>0\label{B proof result},
\end{align}    
where $(a)$ is derived from $\hat{P}^{\frac{1}{2}}[(BR^{-1}B^{\top}+\hat{P})]^{-1}\hat{P}^{\frac{1}{2}}\le\mathcal{I}$ and Lemma \ref{lemma 1}. From Theorem \ref{Pro1}, \eqref{hat K} is stabilizing.
\end{proof}

    \begin{remark}
     The characteristics in Remark \ref{the impact of parameter} still hold for the tailored results, for similar reasons. In a poorer communication channel, as $q$ increases, the stabilizing ranges of $\hat{q}$ in Theorem \ref{scalar th3}, \ref{Th 3} become more restrictive, as shown in Fig. \ref{th4 and th5 stabilizing region}. 
    \end{remark}
    \begin{figure}[h]
    	\centering
    	\subfigure[Stabilizing region for \Cref{ex 1} according to \Cref{scalar th3}.]{\includegraphics[width=0.23\textwidth]{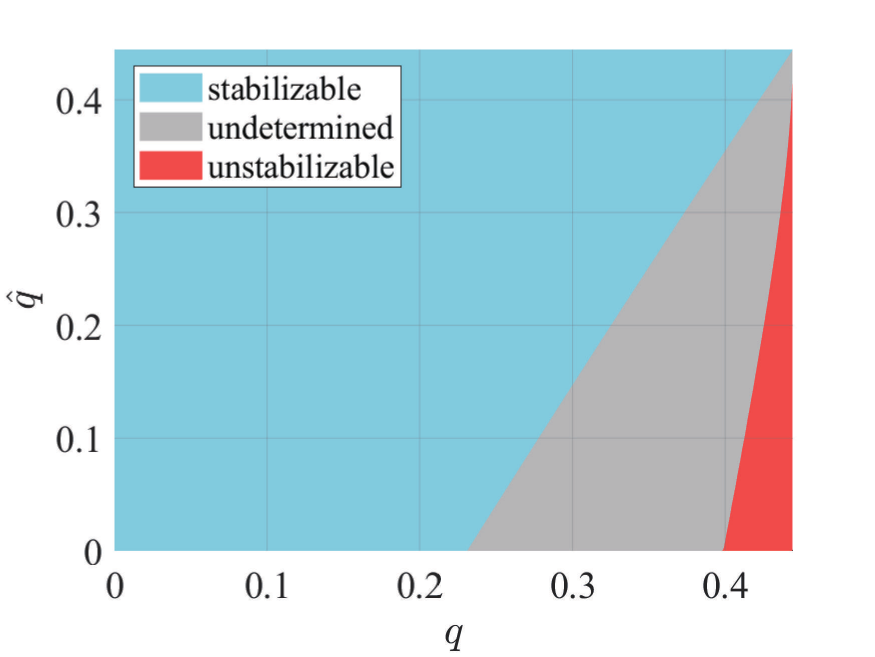}
    		\label{th4 hat_q--q}}
    	\subfigure[Stabilizing region for \Cref{exa 2} according to \Cref{Th 3}.]{
    		\label{th5 hat_q--q}
    		\includegraphics[width=0.23\textwidth]{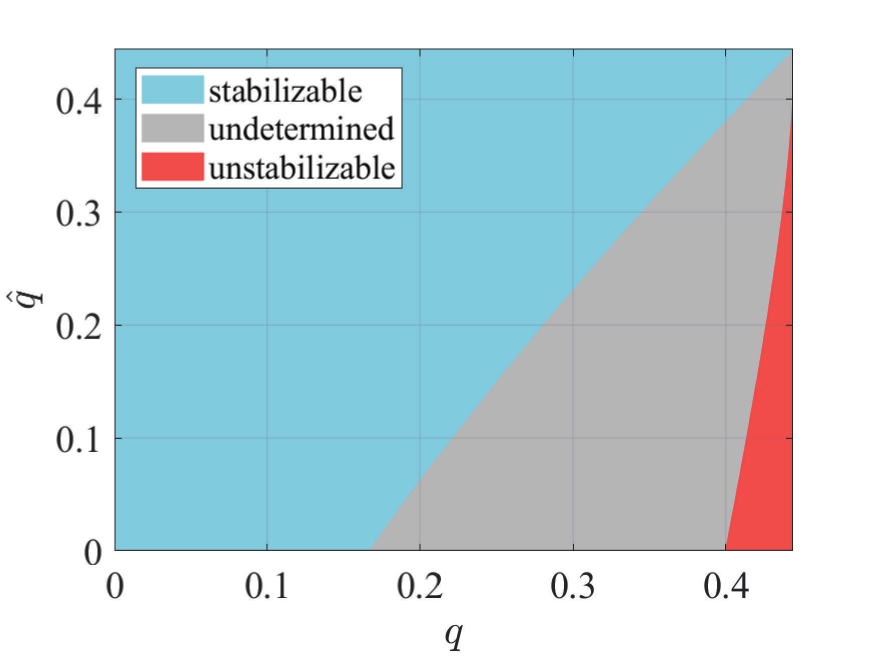}}
    	\caption{The stabilizing regions of $\hat{q}$ are represented in blue.}
    	\label{th4 and th5 stabilizing region}
    \end{figure}
    
    \begin{remark}\rm
        For systems with invertible $B$, Theorem \ref{Th 3} is less conservative than Theorem \ref{Th2}. If $q\!-\!\hat{q}\!<\!\lambda_{\min}\!\{\!Q^{\frac{1}{2}}\!(\!A^{\top}\!P^2\!A)^{-1}Q^{\frac{1}{2}}\!\}/{c_1}$, then $c_1(q-\hat{q})A^{\top}P^2A< Q$. Since $A^{\top}\!P\!A\!<\! A^{\top}\!P\!B(R\!+\!B^{\top}\!P\!B)^{-1}\!B^{\top}\!P\!A\!\le\! c_1A^{\top}\!P^2\!A$ when $B$ is invertible, we know $(q-\hat{q})A^{\top}PA< Q+(1-q)c_2A^{\top}P_0^2A$. So $q-\hat{q}<\lambda_{\min}\{\Xi(A^{\top}PA)^{-1}\Xi\}$.
Additionally, Theorem \ref{Th 3} also applies to scalar systems, but it is more conservative than Theorem \ref{scalar th3} in the scalar case. It can be verified that, when $n=m=1$, $
               \lambda_{\min}\{\Xi(A^{\top}PA)^{-1}\Xi\}
               =\frac{Q+(1-q)RA^2B^2P_0^2(R+B^{2}P)^{-2}}{A^2P}
               \le\frac{Q(B^{-2}R+P)+(1-q)RA^2P^2(R+B^2P)^{-1}}{A^2P^2}
               =\frac{Q(R+B^2P)}{A^2B^2P^2}+\frac{(1-q)R}{R+B^2P}$,
        where $\Xi$, $P_0$ are defined as in Theorem \ref{Th 3}. Moreover, from Theorem \ref{Th 3}, when $q<\lambda_{\min}\{\Xi(A^{\top}PA)^{-1}\Xi\}$, \eqref{hat K} designed with any $\hat{q}\in[0,\frac{1}{\rho(A)^2})$ mean-square stabilizes the system \eqref{1}.
    \end{remark}

In this section, we present upper bounds of $q-\hat{q}$ for $\hat{K}$ to be stabilizing. Essentially, the presence of non-zero estimation error stems from the fact that only finite samples are utilized to estimate the loss probability. Next, we will investigate the sample complexity for $\hat{K}$ to be stabilizing.

\section{Samples complexity analysis}\label{section4}
Based on the strong law of large numbers (Theorem 5.18 of \cite{statistic book}), as $N_q$ tends to $+\infty$, $\hat{q}$ converges to $q$ almost surely. Yet in practice, only finite samples are available. Hence, we are concerned with sample complexity.
The below lemma is a variation of Hoeffding's inequality, which establishes a bridge between sample size and estimation error.
\begin{lemma}\rm(Theorem 4.5 of \cite{statistic book})\label{lemma0}
    Given a sample sequence $\{\lambda_i,i=1,\dots,N_q\}$ and letting $\hat{q}=\frac{1}{N_q}\sum_{i=1}^{N_q}(1-\lambda_i)$. Then for any $\beta\in(0,1)$, it holds that 
$\mathcal{P}\{|q-\hat{q}|\le \Delta(N_q,\beta)\}\ge1-\beta$, where $\Delta(N_q,\beta):=\sqrt{\frac{\log(2/\beta)}{2N_q}}$.
\end{lemma}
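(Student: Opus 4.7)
The plan is to recognize that this lemma is a direct instance of the standard two-sided Hoeffding inequality for i.i.d.\ bounded random variables, and indeed the statement already cites Theorem 4.5 of the referenced text, so the ``proof'' is essentially a name-check plus an algebraic inversion of the tail bound. Nevertheless, to make the argument self-contained I would proceed in three short steps.

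First I would introduce the auxiliary variables $Y_i := 1-\lambda_i$ and note that, by \eqref{lambda distribution}, the $Y_i$ are i.i.d.\ Bernoulli with $\mathcal{P}(Y_i=1)=q$ and $\mathcal{P}(Y_i=0)=1-q$; in particular $Y_i\in[0,1]$ and $\mathbb{E}[Y_i]=q$, so that $\hat{q}=\frac{1}{N_q}\sum_{i=1}^{N_q}Y_i$ is an unbiased sample mean, $\mathbb{E}[\hat{q}]=q$. Second, I would invoke the two-sided Hoeffding tail bound for i.i.d.\ $[0,1]$-valued variables, which for every $t>0$ gives
\begin{equation*}
\mathcal{P}\bigl(|\hat{q}-q|\ge t\bigr)\le 2\exp\!\bigl(-2N_q t^{2}\bigr).
\end{equation*}
Third, I would invert this bound: setting the right-hand side equal to $\beta$ yields $t=\sqrt{\log(2/\beta)/(2N_q)}=\Delta(N_q,\beta)$, and passing to the complementary event produces $\mathcal{P}\{|q-\hat{q}|\le\Delta(N_q,\beta)\}\ge 1-\beta$.

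The only genuinely substantive step, were a fully self-contained derivation required, would be establishing Hoeffding's inequality itself. That argument combines the exponential Markov inequality with Hoeffding's lemma bounding the moment generating function of a bounded zero-mean random variable on $[a,b]$ by $\exp(\lambda^{2}(b-a)^{2}/8)$, followed by optimization in the exponential parameter $\lambda$; for Bernoulli summands on $[0,1]$ this yields exactly the constant $2$ in the exponent above. Since the present lemma is cited verbatim from a textbook, I would not reproduce that derivation, and I foresee no obstacle beyond correctly matching the constants when solving for $t$ in terms of $\beta$.
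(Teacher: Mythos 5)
Your proposal is correct and matches the paper's treatment: the paper states this lemma without proof, citing it directly from the textbook, and your derivation (reducing to the sample mean of the i.i.d.\ Bernoulli variables $1-\lambda_i$, applying the two-sided Hoeffding bound, and solving $2\exp(-2N_q t^2)=\beta$ for $t$) is exactly the standard argument behind the cited result, with the constants worked out correctly.
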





In fact, for a certain $N_q$, if the upper bound on the estimation error is smaller than the ST, $\hat{K}$ is stabilizing. Based on this idea, we give the next lemma regarding the sample complexity.

\begin{lemma}\rm\label{pro2}
    Under Assumption \ref{assump1}, given a sequence of samples $\{\lambda_i,i=1,\dots,N_q\}$ such that $\hat{q}<q_c$, \eqref{hat K} mean-square stabilizes the system \eqref{1} with probability at least $1-\beta,~\forall \beta\in(0,1)$, if $N_q$ satisfies
    \begin{equation}\label{pro2_28}
        N_q>\frac{\log(2/\beta)}{2\bar{\delta}(q)^2},
    \end{equation}
    where $\bar{\delta}(q)$ is the ST for the given system \eqref{1}.
\end{lemma}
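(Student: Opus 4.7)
The plan is to combine Hoeffding's inequality (Lemma \ref{lemma0}) with the definition of the stability threshold (ST) in \eqref{MST definition}. The core observation is that the condition \eqref{pro2_28} is precisely the rearrangement of $\Delta(N_q,\beta)<\bar{\delta}(q)$, so any sample size exceeding this bound pushes the high-probability confidence radius strictly below the threshold at which $\hat{K}$ is guaranteed stabilizing.

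First, I would solve \eqref{pro2_28} for the confidence radius, noting that $N_q>\log(2/\beta)/(2\bar{\delta}(q)^2)$ is equivalent to $\sqrt{\log(2/\beta)/(2N_q)}<\bar{\delta}(q)$, i.e.\ $\Delta(N_q,\beta)<\bar{\delta}(q)$. Then I would invoke Lemma \ref{lemma0} to assert that the event $\mathcal{E}:=\{|q-\hat{q}|\le\Delta(N_q,\beta)\}$ occurs with probability at least $1-\beta$. On $\mathcal{E}$, we have in particular $q-\hat{q}\le|q-\hat{q}|\le\Delta(N_q,\beta)<\bar{\delta}(q)$.

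Next I would split into two cases depending on the sign of $q-\hat{q}$ (which is the relevant quantity, since ST controls the gap when the true rate is underestimated). If $\hat{q}\ge q$, Theorem \ref{Pro1} directly yields that \eqref{th1_19} holds and hence $\hat{K}$ mean-square stabilizes \eqref{1}. If instead $0<q-\hat{q}<\bar{\delta}(q)$, then the pair $(q,\hat{q})$ lies in the set $\Omega_{\bar{\delta}(q)}$ from \eqref{MST definition}, so by the very definition of the ST, $\mathcal{C}(q,\hat{q})>0$, meaning that \eqref{th1_19} is satisfied; Theorem \ref{Pro1} again delivers mean-square stability. Combining both cases, $\hat{K}$ stabilizes \eqref{1} on the event $\mathcal{E}$, which has probability at least $1-\beta$.

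There is no real obstacle: the lemma is a clean composition of an in-probability estimation-error bound and a deterministic stability-threshold guarantee. The only subtlety worth being explicit about is that $\bar{\delta}(q)$ is defined via the one-sided gap $q-\hat{q}<\delta$ in \eqref{MST definition}, whereas Hoeffding's inequality controls the two-sided deviation $|q-\hat{q}|$; this is why the case $\hat{q}\ge q$ must be handled separately via the second conclusion of Theorem \ref{Pro1}, rather than being absorbed into the ST argument. The premise $\hat{q}<q_c$ is carried through unchanged so that \eqref{RE2} admits a positive definite solution and $\hat{K}$ is well-defined.
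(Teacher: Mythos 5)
Your proposal is correct and follows essentially the same route as the paper: rearrange \eqref{pro2_28} into $\Delta(N_q,\beta)<\bar{\delta}(q)$, apply Lemma \ref{lemma0} to get $q-\hat{q}\le\Delta(N_q,\beta)<\bar{\delta}(q)$ with probability at least $1-\beta$, and conclude via the definition of the ST and Theorem \ref{Pro1}. Your explicit case split on the sign of $q-\hat{q}$ is a harmless refinement (the set $\Omega_{\delta}$ in \eqref{MST definition} already contains all pairs with $q-\hat{q}\le 0$, so the overestimation case is absorbed into the ST argument), and the paper leaves this implicit.
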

\begin{proof}
According to Lemma \ref{lemma0}, we know that $\mathcal{P}\{-\Delta(N_q,\beta){\le}q-\hat{q}\le \Delta(N_q,\beta)\} \ge1-\beta$. Then, based on \eqref{pro2_28}, we have $\Delta(N_q,\beta)<\bar{\delta}(q)$. Therefore, $q-\hat{q}<\bar{\delta}(q)$ holds with probability at least $1-\beta$.
\end{proof}
\begin{remark}
	\eqref{pro2_28} in this paper appears similar to (21) in \cite{unknown channel 7}, but their purposes are quite different. (21) in \cite{unknown channel 7} ensures that the worst-case loss probability is less than $q_c$ such that the ``worst-case" CE controller exists. 
	In contrast, under the premise that $\hat{q}<q_c$, this paper provides \eqref{pro2_28} as a sufficient condition for the CE controller synthesized from $\hat{q}$ to be stabilizing in a probabilistic sense.
\end{remark}


The RHS of \eqref{pro2_28} represents the sample complexity for $\hat{K}$ to be stabilizing. Evidently, as $\bar{\delta}(q)$ increases, the sample complexity decreases. By integrating the analysis of $\bar{\delta}(q)$ from the previous section, we derive an upper bound on the sample complexity for $n$-dimensional systems in terms of the system parameters $A,B,Q,R,q$, as formalized in the next theorem. 
\begin{theorem}\rm\label{Th5}
Under Assumption \ref{1}, given a sequence of samples $\{\lambda_i,i=1,\dots,N_q\}$ such that $\hat{q}<q_c$, \eqref{hat K} mean-square stabilizes the system \eqref{1} with probability at least $1-\beta$, if $N_q$ satisfies
\begin{equation}\label{th5_36}
        N_q> \frac{c_1^2\log(2/\beta)}{2\lambda_{\min}\{Q^{\frac{1}{2}}(A^{\top}P^2A)^{-1}Q^{\frac{1}{2}}\}^2},
\end{equation}
where $\beta\in(0,1)$, and $c_1,P$ are defined as in Theorem \ref{Th2}.
\end{theorem}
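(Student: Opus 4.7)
The plan is to chain together the two ingredients already in hand: the probabilistic sample-complexity bound from \Cref{pro2} and the explicit lower bound on the stability threshold $\bar{\delta}(q)$ from \Cref{Th2}. Since \Cref{pro2} already reduces ``stabilizing with probability at least $1-\beta$'' to the inequality $N_q > \log(2/\beta)/(2\bar{\delta}(q)^2)$, all that remains is to replace $\bar{\delta}(q)$ by a quantity that is both explicit in the given parameters $A,B,Q,R,q$ and still valid as a lower bound, so that the resulting sample size is sufficient.

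First, I would invoke \Cref{pro2} to write the condition $N_q > \log(2/\beta)/(2\bar{\delta}(q)^2)$ as the target guarantee. Next, I would apply \Cref{Th2} to obtain $\bar{\delta}(q) \ge \lambda_{\min}\{Q^{1/2}(A^{\top}P^{2}A)^{-1}Q^{1/2}\}/c_1$, where $c_1$ and $P$ are as defined there. Because $\bar{\delta}(q)$ appears in the denominator of the sample-complexity bound, lowering it enlarges the right-hand side; squaring preserves the inequality since both sides are positive. Therefore any $N_q$ satisfying
\begin{equation*}
N_q > \frac{c_1^{2}\log(2/\beta)}{2\,\lambda_{\min}\{Q^{1/2}(A^{\top}P^{2}A)^{-1}Q^{1/2}\}^{2}}
\end{equation*}
automatically satisfies the hypothesis of \Cref{pro2}, and the conclusion follows immediately.

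The main (and really only) obstacle is verifying that the replacement of $\bar{\delta}(q)$ by its explicit lower bound is legitimate, i.e.\ that this lower bound is strictly positive so the squared reciprocal is well defined. This reduces to noting that $Q \succ 0$ and that $P \succ 0$ is guaranteed by \Cref{assump1} together with $q < q_c$; consequently $A^{\top}P^{2}A$ is positive semidefinite and invertible on the range of $A^{\top}$, and the $\lambda_{\min}$ term is strictly positive whenever \Cref{Th2} applies. Given this, the argument is essentially a one-line substitution, and the remainder of the proof consists only of collecting the probability guarantee $1-\beta$ from \Cref{lemma0} through \Cref{pro2}.
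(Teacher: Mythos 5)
Your proposal is correct and matches the paper's own argument exactly: the paper proves \Cref{Th5} by substituting the lower bound on $\bar{\delta}(q)$ from \Cref{Th2} into the sample-complexity condition \eqref{pro2_28} of \Cref{pro2}, which is precisely your chain of reasoning. The monotonicity observation (a smaller stability threshold yields a larger, hence still sufficient, sample-size requirement) and the positivity check are the only details needed, and you have both.
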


Substituting the lower bound of $\bar{\delta}(q)$ from Theorem \ref{Th2} into \eqref{pro2_28}, Theorem \ref{Th5} can be directly obtained.
The RHS of \eqref{th5_36} serves as an upper bound of the sample complexity. Furthermore, when $q$ is sufficiently small, the sample size required for stability may be less than \eqref{th5_36}. Based on Corollary \ref{cor2}, when $q\!<\!{\lambda_{\min}\{Q^{\frac{1}{2}}(A^{\top}\!P^2\!A)^{-1}Q^{\frac{1}{2}}\}}/{c_1}$, $\hat{K}$ stabilizes the system even if $N_q=0$. However, when $q$ is large and approaches $q_c$, the sample complexity will increase rapidly and tend toward $+\infty$, as $P$ increases unboundedly \cite{kalman2005}, as shown in Fig. \ref{th6 sample complexity}.
\begin{figure}[h]
	\centering
	\subfigure[Sample complexity of \Cref{ex 1} based on \eqref{th5_36}.]
	{\includegraphics[width=0.23\textwidth]{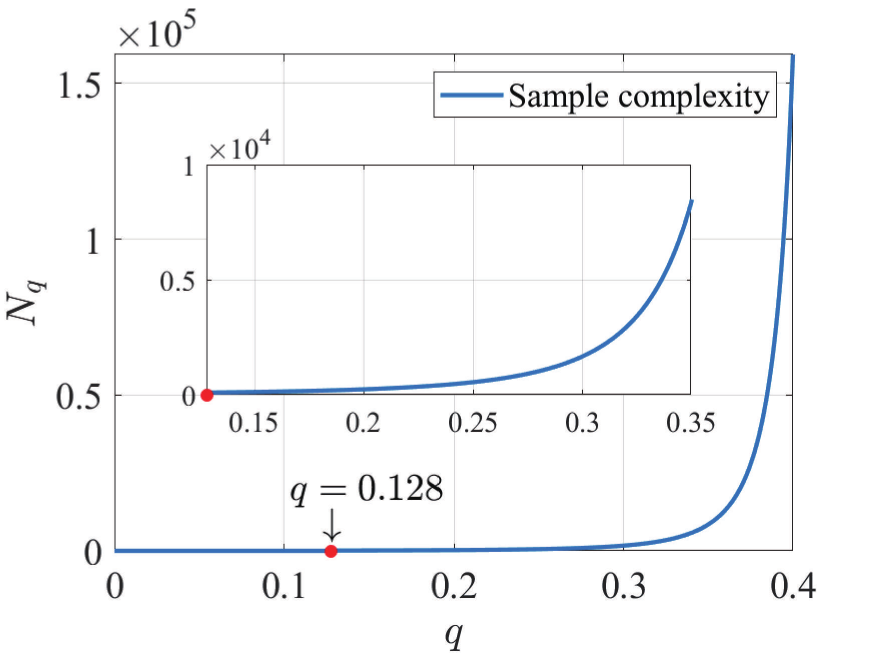}
		\label{th6 scalar sample complexity}}
	\subfigure[Sample complexity of \Cref{exa 2} based on \eqref{th5_36}.]{
		\label{th6 vector sample complexity}
		\includegraphics[width=0.23\textwidth]{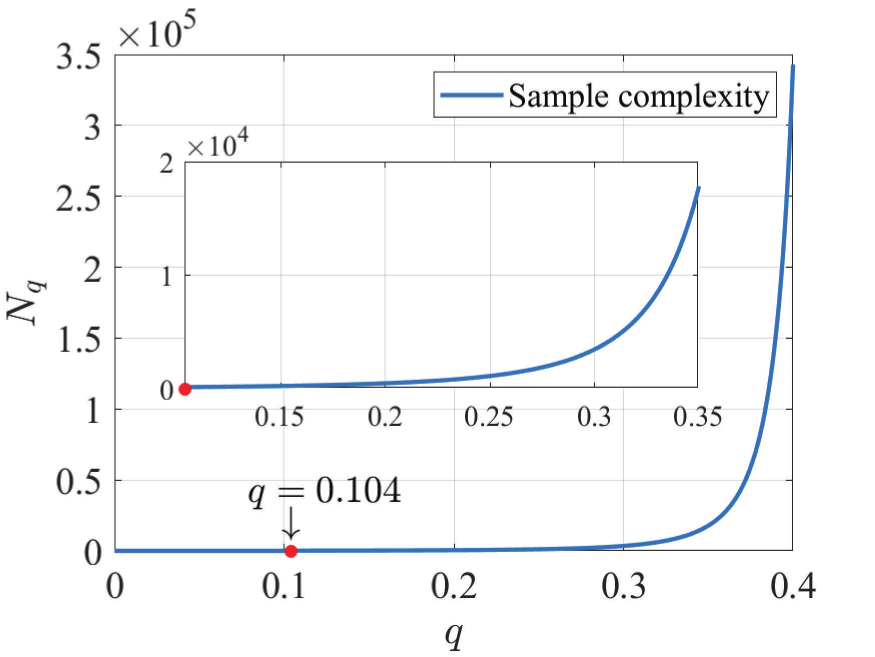}}
	\subfigure[Sample complexity of \Cref{ex 1} based on \eqref{scalar sample complexity}.]
		{\includegraphics[width=0.23\textwidth]{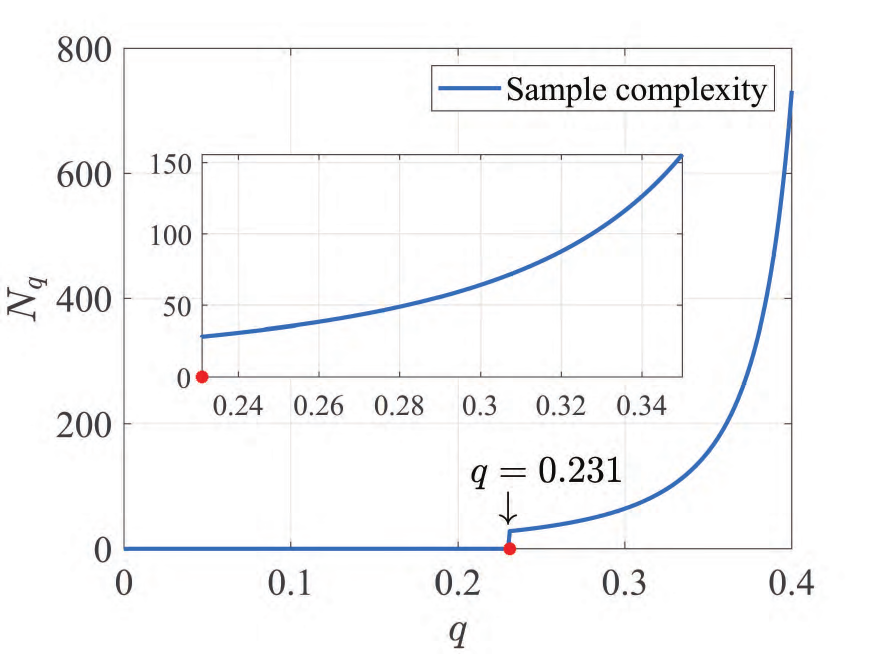}
			\label{sample complexity tailored for scalar}}
	\subfigure[Sample complexity of \Cref{exa 2} based on \eqref{invertible B sample complexity}.]{
			\label{sample complexity tailored for vector}
			\includegraphics[width=0.23\textwidth]{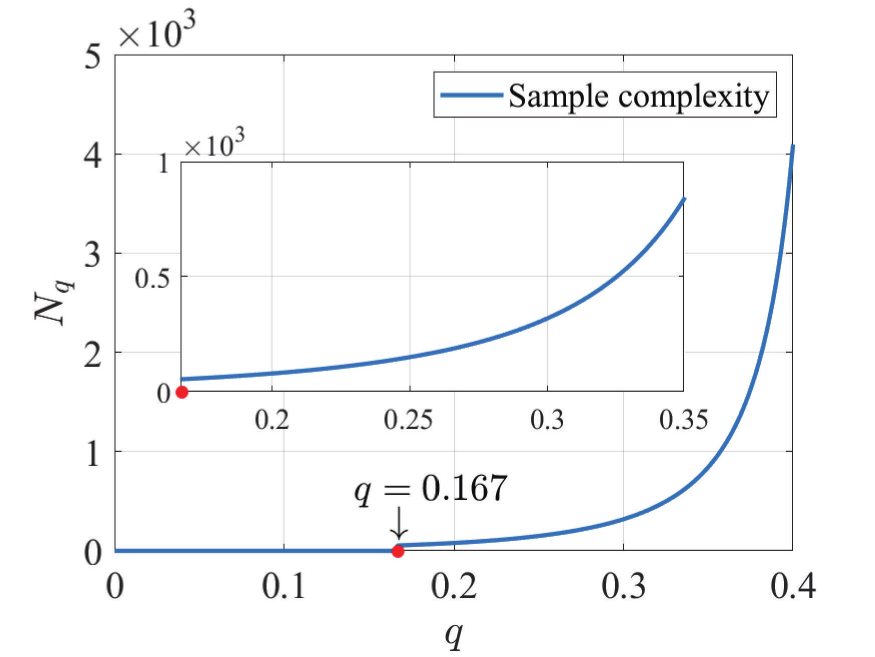}}
	\caption{Sample complexity for \eqref{hat K} to be stabilizing with $\beta\!\!=\!\!0.1$.}
	\label{th6 sample complexity}
\end{figure}

Additionally, under the same assumptions as in Theorem \ref{Th5}, for scalar systems, $\hat{K}$ mean-square stabilizes the system with probability at least $1-\beta$ if $N_q$ satisfies
\begin{equation}\label{scalar sample complexity}
        N_q>\frac{\log(2/\beta)A^4B^4P^4(R+B^2P)^2}{[Q(R+B^2P)^2+(1-q)RA^2B^2P^2]^2};
    \end{equation}
for systems with invertible $B$, $\hat{K}$ stabilizes the system with probability at least $1-\beta$ if $N_q$ satisfies 
\begin{equation}\label{invertible B sample complexity}
            N_q> \frac{\log(2/\beta)}{\lambda_{\min}\{\Xi(A^{\top}PA)^{-1}\Xi\}^2}.
        \end{equation}
The results \eqref{scalar sample complexity} and \eqref{invertible B sample complexity} can be proven by combining Lemma \ref{pro2} with Theorem \ref{scalar th3} and \ref{Th 3}, respectively. The sample complexity curves based on \eqref{scalar sample complexity} and \eqref{invertible B sample complexity} are shown in Fig. \ref{sample complexity tailored for scalar}, \ref{sample complexity tailored for vector}.

Since ST and sample complexity depend on the unknown $q$, they cannot be directly used to determine whether $\hat{K}$ is stabilizing in practice. Therefore, from a practical perspective, a sufficient condition, independent of $q$, is provided to verify whether $\hat{K}$ is stabilizing with a certain probability.

\begin{theorem}\label{Th1}\rm
    Under Assumption \ref{assump1}, given a sequence of samples $\{\lambda_i,i\!=\!1,\dots,N_q\}$ such that $\hat{q}<q_c$, \eqref{hat K} mean-square stabilizes system \eqref{1} with probability at least $1-\beta$, if 
    \begin{equation}\label{Th1_23}
        \bar{q}\ge \hat{q}+\Delta(N_q,\beta),
    \end{equation}
    where $\beta\in(0,1)$, $\Delta(N_q,\beta)$ is defined in Lemma \ref{lemma0}, $\bar{q}$ is the optimal value of the semi-definite programming (SDP)
    \begin{subequations}\label{SDP}
        \begin{alignat}{2}\label{SDP1}
            \max_{x\in[0,1]} \quad & x\\
            {\rm s.t.} \quad & x\ge \hat{q}\label{SDP2}\\
            &Q+(1-x)\hat{K}^{\top}R\hat{K}-(x-\hat{q})\times\notag\notag\\
            &A^{\top}\hat{P}B(R+B^{\top}\hat{P}B)^{-1}B^{\top}\hat{P}A>0,\label{SDP3}
        \end{alignat}
    \end{subequations}
    where $\hat{K}$, $\hat{P}$ are defined in \eqref{hat K}, \eqref{RE2}, respectively.
\end{theorem}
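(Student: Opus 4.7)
The plan is to use the SDP \eqref{SDP} as a $q$-free certificate of the stabilization condition in \Cref{Pro1}, coupled with the Hoeffding-type concentration of \Cref{lemma0}. The central observation is that constraint \eqref{SDP3} is exactly \eqref{th1_19} with the unknown $q$ replaced by the optimization variable $x$ and with $\hat{K},\hat{P}$ at their computed values; consequently, $\bar q$ can be read as the largest value of a hypothetical loss rate for which the Lyapunov argument of \Cref{Pro1} would still certify that $\hat K$ mean-square stabilizes \eqref{1}.

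The key structural fact I would exploit is monotonicity of the matrix-valued affine map
\begin{equation*}
\mathcal{C}(x,\hat q) = Q + (1-x)\hat K^{\top} R \hat K - (x-\hat q) A^{\top} \hat P B (R + B^{\top} \hat P B)^{-1} B^{\top} \hat P A
\end{equation*}
in the variable $x$. Because $R$ and $\hat P$ are positive definite, both coefficient matrices $\hat K^{\top} R \hat K$ and $A^{\top} \hat P B (R + B^{\top} \hat P B)^{-1} B^{\top} \hat P A$ are positive semidefinite, so $\mathcal{C}(\cdot,\hat q)$ is monotonically nonincreasing in $x$ in the positive semidefinite sense. Combined with continuity, this renders the feasible set of the SDP a (half-open) interval $[\hat q, \bar q)$, and $\mathcal{C}(x,\hat q)>0$ for every $x \in [\hat q, \bar q)$.

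With these pieces in hand, the proof proceeds in two steps. By \Cref{lemma0}, the event $\{|q-\hat q| \le \Delta(N_q,\beta)\}$ has probability at least $1-\beta$; on this event the hypothesis \eqref{Th1_23} gives $q \le \hat q + \Delta(N_q,\beta) \le \bar q$, and the monotonicity observation then yields $\mathcal{C}(q,\hat q)>0$. Since this is precisely the stabilization criterion \eqref{th1_19}, \Cref{Pro1} delivers mean-square stability of \eqref{1} under $\hat K$ with probability at least $1-\beta$.

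The only subtlety I anticipate is the boundary case $q=\bar q$: because the SDP uses a strict constraint and $\bar q$ is a supremum that need not be attained, strict positive definiteness of $\mathcal{C}(q,\hat q)$ at $q=\bar q$ is not automatic from monotonicity alone. I would bypass this either by a small-perturbation argument (tightening the SDP to $\mathcal{C}(x,\hat q) \ge \varepsilon \mathcal{I}$, applying the above for each $\varepsilon>0$, and then letting $\varepsilon \downarrow 0$), or by noting that $\hat q$ takes values on the finite grid $\{k/N_q\}_{k=0}^{N_q}$, so the event $\{q=\bar q(\hat q)\}$ is nongeneric; either route leaves the $1-\beta$ probability guarantee intact.
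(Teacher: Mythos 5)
Your proposal is correct and follows essentially the same route as the paper's proof: Hoeffding's inequality (\Cref{lemma0}) gives $q\le\bar{q}$ with probability at least $1-\beta$ under \eqref{Th1_23}, and the monotone (nonincreasing) dependence of $\mathcal{C}(x,\hat{q})=Q+(1-x)\hat{K}^{\top}R\hat{K}-(x-\hat{q})A^{\top}\hat{P}B(R+B^{\top}\hat{P}B)^{-1}B^{\top}\hat{P}A$ on $x$, combined with feasibility of the SDP, yields $\mathcal{C}(q,\hat{q})>0$ so that \Cref{Pro1} applies. The boundary case $q=\bar{q}$ you flag is indeed glossed over in the paper, which simply writes $\mathcal{C}(q,\hat{q})>\mathcal{C}(\bar{q},\hat{q})$ and invokes \eqref{SDP3} at $x=\bar{q}$; your $\varepsilon$-tightening remark is a careful refinement rather than a departure.
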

\begin{proof}
    According to Lemma \ref{lemma0}, we have $\mathcal{P}\{-\Delta(N_q,\beta){\le}q-\hat{q}\le \Delta(N_q,\beta)\}\ge1-\beta.$
Combining with \eqref{Th1_23}, we know $q\le \bar{q}$ with probability at least $1-\beta$.
Hence, there is $Q\!+\!(1\!-\!q)\hat{K}^{\top}\!R\hat{K}\!-\!(q\!-\!\hat{q})A^{\top}\!\hat{P}B(R\!+\!B^{\top}\!\hat{P}\!B)^{-1}\!B^{\top}\!\hat{P}\!A
>\!Q\!+\!(1\!-\!\bar{q})\hat{K}^{\top}\!R\hat{K}\!-\!(\bar{q}-\hat{q})A^{\top}\!\hat{P}B(R\!+\!B^{\top}\!\hat{P}\!B)^{-1}B^{\top}\!\hat{P}\!A$
    with probability at least $1-\beta$. Then, from the constraint \eqref{SDP3}, we know that 
$Q+(1-q)\hat{K}^{\top}R\hat{K}-(q-\hat{q})A^{\top}\hat{P}B(R+B^{\top}\hat{P}B)^{-1}B^{\top}\hat{P}A>0$
    with probability at least $1-\beta$. According to Theorem \ref{Pro1}, \eqref{hat K} mean-square stabilizes the system \eqref{1} with probability at least $1-\beta$.
\end{proof}

\begin{remark}
Theorem \ref{Th1} provides a sufficient condition for $\hat{K}$ to mean-square stabilize the system \eqref{1}. Given a $\hat{q}$ estimated from $N_q$ samples, if \eqref{Th1_23} holds, then $\hat{K}$ is stabilizing with probability at least $1-\beta$. Since \eqref{Th1_23} is independent of $q$, it can be directly applied in practice. It is important to emphasize that $\bar{q}$ is only used to verify the validity of \eqref{Th1_23} and is not involved in constructing the controller. Therefore, even a large $\bar{q}$ does not introduce additional conservatism into this theorem. Moreover, the SDP \eqref{SDP} is always feasible. For instance, $x=\hat{q}$ is a feasible solution as it clearly satisfies \eqref{SDP2} and \eqref{SDP3}. 
\end{remark}

A numerical example is provided to illustrate the applicability of Theorem \ref{Th1}. Consider the two-dimensional system $(A,B)$ given in Example \ref{exa 2} with initial value $x_0=[0.9325;1.1616]$. Assume $q=0.2$ which satisfies Assumption \ref{assump1} and is unknown to the controller. Using $300$ samples, we obtain an estimated probability $\hat{q}=0.1633$. Substituting $\hat{q}=0.1633$ into \eqref{SDP2} and \eqref{SDP3}, the solution to \eqref{SDP} is $\bar{q}=0.4181$. It can be verified that \eqref{Th1_23} holds when $\beta=0.01$, and then we know $\hat{K}$ designed with $\hat{q}=0.1633$ is stabilizing with probability at least $0.99$. The system trajectory of Example \ref{exa 2} under $\hat{K}$ is displayed in Fig. \ref{stabilize system trajectory}.

\begin{figure}
    \centering
    \subfigure[$\hat{K}$ designed with $\hat{q}=0.1633$ is stabilizing, which is estimated by $300$ samples.]{
    \includegraphics[width=0.45\linewidth]{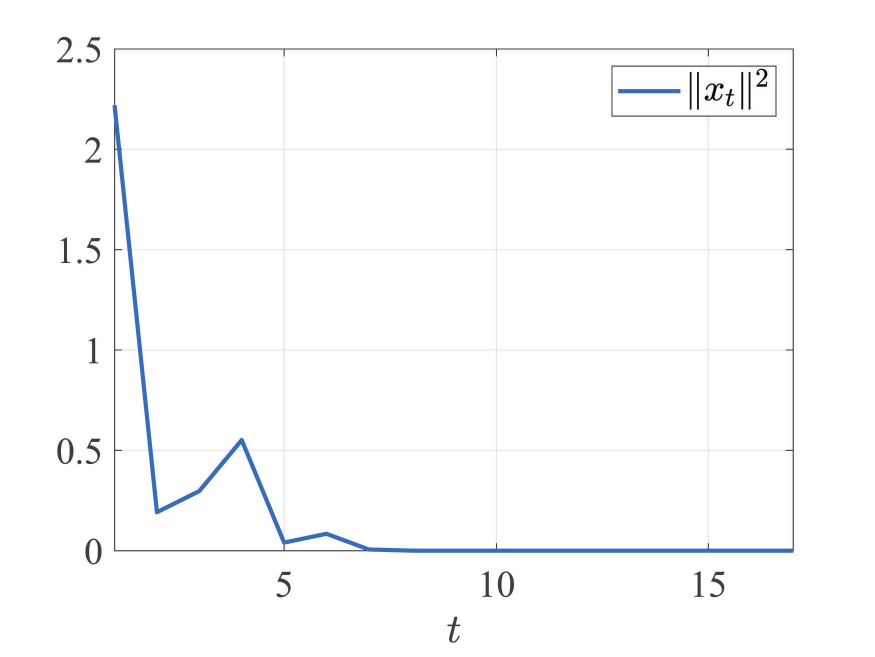}
    \label{stabilize system trajectory}}
    \subfigure[The optimality gap between $\hat{K}$ and $K$, i.e., $J(x_0,\hat{\mathbf{u}})-J^*(x_0)$.]{
    \includegraphics[width=0.45\linewidth]{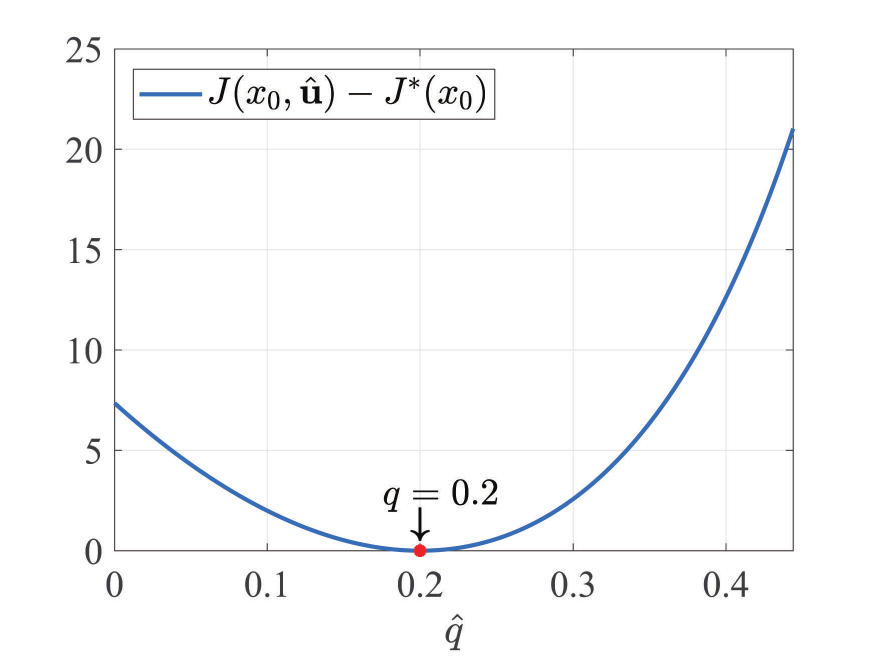}
    \label{gap cost}
    }
    \caption{The trajectory and gap between $J(x_0,\hat{\mathbf{u}})$ and $J^*(x_0)$ for the system given in Example \ref{exa 2} with $q=0.2$.}
\end{figure}

\section{Performance analysis}\label{section5}
The stability issue under unknown packet drop probability has been discussed earlier. Next, the performance of $\hat{K}$ is also an important concern. In this section, we analytically quantify the sub-optimality of $\hat{K}$ in terms of the estimation error and prove that the optimality gap converges to $0$ as $\hat{q}$ approaches to $q$. First, the continuity of $\hat{P}$ with respect to $\hat{q}$ is given below.
\begin{lemma}\label{hat_P limits}
    With Assumption \ref{assump1} and $\hat{q}<q_c$ holding, if $\hat{q}$ converges to $q$, then the corresponding matrix $\hat{P}$ converges to $P$, i.e., $\lim_{\hat{q}\to q}\hat{P}=P$, where $\hat{P}$, $P$ are the positive definite solutions of \eqref{RE1} and \eqref{RE2} respectively.
\end{lemma}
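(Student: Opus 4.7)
The plan is to combine the monotonicity established in Lemma \ref{lemma 1} with a compactness argument and the uniqueness of the positive definite solution of the modified Riccati equation. First, I would fix an arbitrary sequence $\{\hat{q}_k\} \subset [0, q_c)$ with $\hat{q}_k \to q$ and choose $\varepsilon > 0$ small enough that $[\underline{q}, \overline{q}] := [\max\{0, q-\varepsilon\}, q+\varepsilon] \subset [0, q_c)$. For all sufficiently large $k$, $\hat{q}_k \in [\underline{q}, \overline{q}]$, and applying Lemma \ref{lemma 1} twice (interchanging the roles of the true and estimated rates at the two endpoints) yields $P(\underline{q}) \le \hat{P}_k \le P(\overline{q})$, where $P(\cdot)$ denotes the positive definite solution of the modified Riccati equation at a given parameter. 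Hence $\{\hat{P}_k\}$ is uniformly bounded in the space of symmetric matrices.

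Next, I would invoke Bolzano--Weierstrass: every subsequence of $\{\hat{P}_k\}$ admits a further convergent subsequence $\hat{P}_{k_j} \to P^{\star}$. The Riccati operator $(\hat{q}, X) \mapsto Q + A^{\top} X A - (1-\hat{q}) A^{\top} X B (R + B^{\top} X B)^{-1} B^{\top} X A$ is jointly continuous on $[0, q_c) \times \{X \ge 0\}$, since $R + B^{\top} X B \ge R > 0$ is always invertible and matrix inversion is continuous on the set of positive definite matrices. Passing to the limit in \eqref{RE2} along the subsequence therefore shows that $P^{\star}$ satisfies the modified Riccati equation \eqref{RE1} associated with $q$.

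The main obstacle will be ensuring that $P^{\star}$ is in fact positive definite, so that the standard uniqueness of the positive definite solution to \eqref{RE1} (see \cite{kalman2005}) can be invoked to conclude $P^{\star} = P$. I would establish this by rewriting the Riccati equation via the matrix inversion lemma as
\begin{equation*}
\hat{P}_k = Q + \hat{q}_k A^{\top} \hat{P}_k A + (1-\hat{q}_k) A^{\top} \bigl(\hat{P}_k^{-1} + B R^{-1} B^{\top}\bigr)^{-1} A,
\end{equation*}
which yields the uniform lower bound $\hat{P}_k \ge Q > 0$, and hence $P^{\star} \ge Q > 0$. Uniqueness then forces $P^{\star} = P$. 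Since every convergent subsequence of the bounded sequence $\{\hat{P}_k\}$ has the same limit $P$, the whole sequence converges to $P$. As the sequence $\{\hat{q}_k\}$ was arbitrary, this establishes $\lim_{\hat{q} \to q} \hat{P} = P$.
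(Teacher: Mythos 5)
Your proof is correct, and it shares the skeleton of the paper's argument---boundedness of $\hat{P}$ via the monotonicity in Lemma \ref{lemma 1}, passage to the limit inside the modified Riccati equation \eqref{RE2} by continuity of the Riccati operator, and identification of the limit through uniqueness of the positive definite solution of \eqref{RE1}---but the middle step is executed differently. The paper splits into the one-sided cases $\hat{q}\to q^-$ and $\hat{q}\to q^+$: on each side the map $\hat{q}\mapsto\hat{P}$ is monotone and bounded, so the one-sided limits $\hat{P}^-$ and $\hat{P}^+$ exist by monotone convergence of self-adjoint matrices, and each is shown to solve \eqref{RE1} and hence equal $P$. You instead take an arbitrary sequence $\hat{q}_k\to q$, sandwich $\hat{P}_k$ between $P(\underline{q})$ and $P(\overline{q})$ by applying Lemma \ref{lemma 1} at both endpoints of a small interval around $q$, and conclude via Bolzano--Weierstrass and a standard subsequence argument. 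Your route handles non-monotone approach directly and avoids invoking convergence of monotone bounded matrix families (the paper's one-sided limits still cover the general case, but only because $\hat{q}\mapsto\hat{P}$ is a monotone function of a scalar, so its one-sided limits exist independently of the approach path); the paper's route is shorter. One remark: your detour through the matrix-inversion-lemma form to obtain $\hat{P}_k\ge Q$ is correct but redundant, since the lower bound $\hat{P}_k\ge P(\underline{q})>0$ from your own sandwich already forces $P^{\star}>0$ and lets you invoke uniqueness.
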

\begin{proof}
    When $\hat{q}<q$ and $\hat{q}$ increases towards $q$, based on Lemma \ref{lemma 1}, $\hat{P}$ is monotonically increasing and bounded. Therefore, there exists a positive definite matrix $\hat{P}^-$ such that $\lim_{\hat{q}{\to}q^-}\hat{P}=\hat{P}^-$.
Then, taking the left limit with respect to $\hat{q}$ on both side of \eqref{RE2}, we have 
        $\hat{P}^-\!=\!Q\!+\!A^{\top}\!\hat{P}^-\!\!A\!-\!(1\!-q)A^{\top}\!\hat{P}^-\!B(\!R\!+\!B^{\top}\!\hat{P}^-\!B)^{-1}\!B^{\top}\!\hat{P}^-\!\!A$
    which shows $\hat{P}^-$ is the positive definite solution of \eqref{RE1}. We have $\hat{P}^-=P$ due to the uniqueness of the solution of \eqref{RE1}. Similarly, when $\hat{q}>q$ and $\hat{q}$ decreases towards $q$, there also exists a positive definite matrix $\hat{P}^+$ satisfying $\lim_{\hat{q}{\to}q^+}\hat{P}=\hat{P}^+$ and it can be demonstrated that $\hat{P}^+=P$. In summary, $\lim_{\hat{q}{\to}q}\hat{P}=P$.
\end{proof}

Then, the next theorem analytically expresses the optimality gap as a linear combination of $q-\hat{q}$ and $\hat{P}-P$.

\begin{theorem}\label{performance analysis}
    With Assumption \ref{assump1} and $\hat{q}<q_c$ holding, when \eqref{hat K} mean-square stabilizes the system \eqref{1}, the gap between $J(x_0,\mathbf{\hat{u}})$ and the optimal cost $J^*(x_0)$ satisfies
    \begin{equation}\label{performance gap}
        \begin{aligned}
            J(x_0,\hat{\mathbf{u}})-J^*(x_0)=&{\rm tr}\big\{(q-\hat{q})X_{\hat{K}}+(\hat{P}-P)X_0\big\}
        \end{aligned}
    \end{equation}
    where $X_{\hat{K}}=A^{\top}\hat{P}B(R+B^{\top}\hat{P}B)^{-1}B^{\top}\hat{P}A\mathbb{E}\{\sum_{t=0}^{\infty}x_tx_t^{\top}\}$, $X_0=\mathbb{E}(x_0x_0^{\top})$ are bounded positive semi-definite matrices.
Moreover, as $\hat{q}$ converges to $q$, the gap $J(x_0,\mathbf{\hat{u}})-J^*(x_0)$ converges to $0$.
\end{theorem}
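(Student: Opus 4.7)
The plan is to derive the identity \eqref{performance gap} by a telescoping Lyapunov-type computation with $\hat{P}$ along the closed-loop trajectory generated by $\hat{K}$, and then to establish the limit using Lemma~\ref{hat_P limits} together with a continuity argument that keeps $X_{\hat{K}}$ bounded.

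First I would reuse the one-step identity already derived inside the proof of Theorem~\ref{Pro1}. Using $\mathbb{E}[\lambda_t\hat{u}_t^{\top}R\hat{u}_t\mid x_t]=(1-q)\,x_t^{\top}\hat{K}^{\top}R\hat{K}x_t$ and rearranging, I obtain the per-step relation
\begin{equation*}
\mathbb{E}[x_t^{\top}Qx_t+\lambda_t\hat{u}_t^{\top}R\hat{u}_t]=\mathbb{E}[x_t^{\top}\hat{P}x_t]-\mathbb{E}[x_{t+1}^{\top}\hat{P}x_{t+1}]+(q-\hat{q})\,\mathbb{E}[x_t^{\top}Mx_t],
\end{equation*}
with $M\triangleq A^{\top}\hat{P}B(R+B^{\top}\hat{P}B)^{-1}B^{\top}\hat{P}A$. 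Summing over $t=0,\dots,N-1$ telescopes the first two right-hand side terms, and the closed-loop mean-square stability forces $\mathbb{E}[x_N^{\top}\hat{P}x_N]\to 0$ as $N\to\infty$. Applying the identity $\mathbb{E}[y^{\top}Zy]={\rm tr}(Z\,\mathbb{E}[yy^{\top}])$ I reach $J(x_0,\hat{\mathbf{u}})={\rm tr}(\hat{P}X_0)+(q-\hat{q})\,{\rm tr}(X_{\hat{K}})$. Subtracting $J^{*}(x_0)={\rm tr}(PX_0)$ and grouping the two traces produces exactly \eqref{performance gap}. Positive semi-definiteness of $X_0$ is immediate; $X_{\hat{K}}$ is a product of positive semi-definite factors, and its boundedness follows because mean-square stability forces ${\rm tr}(\mathbb{E}[x_tx_t^{\top}])$ to decay geometrically, so $\mathbb{E}\{\sum_{t=0}^{\infty}x_tx_t^{\top}\}$ converges in norm.

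For the final sentence, Lemma~\ref{hat_P limits} gives $\hat{P}\to P$, hence $\hat{K}\to K$ and $M$ stays bounded, while $(q-\hat{q})\to 0$ trivially. The delicate point, and the main obstacle, is that the infinite sum inside $X_{\hat{K}}$ itself depends on $\hat{q}$ through the closed-loop second-moment dynamics, so I need a uniform bound on $\mathbb{E}\{\sum_{t=0}^{\infty}x_tx_t^{\top}\}$ in a neighbourhood of $q$. I plan to handle this by writing the second-moment recursion $\mathbb{E}[x_{t+1}x_{t+1}^{\top}]=\Phi_{\hat{q}}(\mathbb{E}[x_tx_t^{\top}])$ for a linear operator $\Phi_{\hat{q}}$ that depends continuously on $\hat{q}$ and is Schur stable at $\hat{q}=q$ (because the LQR-optimal $K$ mean-square stabilizes under Assumption~\ref{assump1}); continuity of the spectral radius then keeps $\Phi_{\hat{q}}$ Schur on a neighbourhood of $q$, delivering a uniform geometric decay of $\mathbb{E}[x_tx_t^{\top}]$ and hence the required uniform bound. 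Combining with $(q-\hat{q})\to 0$ and $(\hat{P}-P)\to 0$ shows both traces in \eqref{performance gap} vanish as $\hat{q}\to q$.
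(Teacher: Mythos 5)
Your proposal is correct and follows essentially the same route as the paper: a telescoping Lyapunov computation with $\hat{P}$ along the closed-loop trajectory yielding $J(x_0,\hat{\mathbf{u}})={\rm tr}\{(q-\hat{q})X_{\hat{K}}+\hat{P}X_0\}$ (the paper performs the same cancellation by substituting $\hat{u}_{t-1},\dots,\hat{u}_0$ backwards rather than via your forward per-step identity), followed by subtracting $J^*(x_0)={\rm tr}\{PX_0\}$ and invoking Lemma~\ref{hat_P limits} for the limit. The one place you go beyond the paper is the final step: the paper simply asserts that $X_{\hat{K}}$ is bounded and concludes, whereas you correctly flag that $X_{\hat{K}}$ depends on $\hat{q}$ through the closed-loop second-moment dynamics and supply the uniform bound via Schur stability of $\Phi_{\hat{q}}$ on a neighbourhood of $q$ — a worthwhile tightening of the argument.
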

\begin{proof}
    When \eqref{hat K} stabilizes the system \eqref{1}, the cost $J(x_0,\mathbf{\hat{u}})=\sum_{t=0}^{\infty}\mathbb{E}\Big\{x_t^{\top}Qx_t+\lambda_t\hat{u}_tR\hat{u}_t\Big\}$ is finite, and $\lim_{t\to\infty}\mathbb{E}\{x_{t+1}^{\top}\hat{P}x_{t+1}\}=0$. Therefore, 
    \begin{align}\label{J(hat u)}
        &J(x_0,\mathbf{\hat{u}})
        =\lim_{t\to\infty}\sum_{n=0}^t\mathbb{E}\{x_n^{\top}Qx_n+\lambda_n\hat{u}_n^{\top}R\hat{u}_n+x_{t+1}^{\top}\hat{P}x_{t+1}\}\notag\\
        \overset{(a)}{=}&\lim_{t\to\infty}\Bigg[\sum_{n=0}^{t-1}\mathbb{E}\{x_n^{\top}Qx_n+\lambda_n\hat{u}_n^{\top}R\hat{u}_n+x_t^{\top}\hat{P}x_t\}\notag\\
        &+(q-\hat{q})\mathbb{E}\{x_t^{\top}A^{\top}\hat{P}B(R+B^{\top}\hat{P}B)^{-1}B^{\top}\hat{P}Ax_t\}\Bigg]\notag\\
        \overset{(b)}{=}&\lim_{t\to\infty}\Bigg[(q-\hat{q})\sum_{n=0}^{t}\mathbb{E}\{x_n^{\top}A^{\top}\hat{P}B(R+B^{\top}\hat{P}B)^{-1}B^{\top}\hat{P}Ax_n\}\notag\\
        +&\mathbb{E}\{x_0^{\top}\hat{P}x_0\}\Bigg]={\rm tr}\big\{(q-\hat{q})X_{\hat{K}}+\hat{P}X_0\big\},
    \end{align}
    where $(a)$ follows from $\hat{u}_t=-(R+B^{\top}\hat{P}B)^{-1}B^{\top}\hat{P}Ax_t$; $(b)$ follows from substituting $\hat{u}_{t-1},\hat{u}_{t-2},\dots,\hat{u}_0$ in sequence. Because $\hat{K}$ mean-square stabilizes the system \eqref{1}, $\mathbb{E}\{\sum_{t=0}^{\infty}x_tx_t^{\top}\}$ is bounded. Based on \eqref{J(hat u)} and $J^*(x_0)={\rm tr}\{PX_0\}$, we obtain \eqref{performance gap}. Moreover, since $X_{\hat{K}},X_{0}$ are bounded and $\lim_{\hat{q}{\to}q}\hat{P}=P$ (given by Lemma \ref{hat_P limits}), there is $\lim_{\hat{q}\to q}J(x_0,\hat{\mathbf{u}})-J^*(x_0)=0$.
\end{proof}

\begin{remark}
Based on Theorem \ref{performance analysis}, more intuitive and concise upper bounds on the optimality gap can be derived from \eqref{performance gap}. When $\hat{q}<q$, we have $J(x_0,\hat{\mathbf{u}})-J^*(x_0)\le{\rm tr}\{X_{\hat{K}}\}(q-\hat{q})$ because $\hat{P}-P\le0$ (proven by Lemma \ref{assump1}). And when $\hat{q}>q$, there is $J(x_0,\hat{\mathbf{u}})-J^*(x_0)\le {\rm tr}\{X_0\}\lambda_{\max}\{\hat{P}-P\}$. Therefore, the optimality gap can always be directly bounded by the estimated error $q-\hat{q}$ or by the corresponding matrices $\hat{P}-P$. Consider the system given in Example \ref{exa 2} with $x_0\!=\![5;5]$ and assume $q=0.2$. Fig. \ref{gap cost} illustrates the gap $J(x_0,\hat{\mathbf{u}})-J^*(x_0)$ under different $\hat{q}$. As $\hat{q}$ approaches $q$, the gap decreases, whereas as $\hat{q}$ diverges from $q$, the gap increases. If the sample size approaches infinity, the optimality gap tends to zero almost surely. Furthermore, as shown in Fig. \ref{gap cost}, using a larger estimate to design $\hat{K}$ is a conservative approach that may incur higher costs. For example, the optimality gap at $\hat{q}=0.4$ is clearly greater than the cost at $\hat{q}=0$.
\end{remark}

\section{Simulations}\label{section6}
This section presents numerical simulations of two systems to validate the main results.
\begin{example}\rm\label{ex 1}
    Consider a scalar system with $A=1.5$, $B=Q=R=1$. The stabilizing regions (represented in blue) distinguished by  \Cref{Th2} and \Cref{scalar th3} are shown in Fig. \ref{th3 scalar hat_q--q} and \ref{th4 hat_q--q}, respectively. The unstabilizable regions (red regions) are searched with a step size $0.001$ of $q$ and $\hat{q}$ based on \Cref{thm n and s scalar}. The gray areas should ideally be blue, but they cannot be distinguished by the \Cref{Th2} and \Cref{scalar th3}.
    The sample complexity curves given by \eqref{th5_36} and \eqref{scalar sample complexity} are shown in Fig. \ref{th6 scalar sample complexity} and \ref{sample complexity tailored for scalar}, respectively. 
    When $q<0.128$ in Fig. \ref{th6 scalar sample complexity} and $q<0.231$ in Fig. \ref{sample complexity tailored for scalar}, $\hat{K}$ is stabilizing even if $N_q=0$.
     For scalar systems, Theorem \ref{scalar th3} exhibits less conservatism than Theorem \ref{Th2}, as Theorem \ref{scalar th3} is the tailored result. Hence, the sample complexity based on the tailored result \eqref{scalar sample complexity} is also significantly lower than that based on the general result \eqref{th5_36}.

\end{example}
\begin{example}\rm\label{exa 2}
    Given a two-dimensional system with $A=[1.5,0.1;0,1]$, $B\!=\!Q\!=\!R\!=\![1,0;0,1]$.
    The stabilizing regions distinguished by \Cref{Th2} and \Cref{Th 3} are shown in Fig. \ref{th3 vector hat_q--q} and \ref{th5 hat_q--q}, respectively, where the meanings of the different colored regions correspond to those in \Cref{ex 1}. Particularly, the red regions are searched using a step size $0.001$ of $q,\hat{q}$ based on the conclusion from \cite{xiaonan2012} that $\hat{K}$ is stabilizing if and only if $\rho(\Phi)<1$ where $\Phi\triangleq[A+(1-q)B\hat{K}]\otimes[A+(1-q)B\hat{K}]+(q-q^2)(B\hat{K})\otimes(B\hat{K})$. Clearly, Theorem \ref{Th 3} exhibits less conservatism than Theorem \ref{Th2} for this specific system, which makes the sample complexity based on \eqref{invertible B sample complexity} lower than that based on \eqref{th5_36}, as shown in Fig. \ref{th6 vector sample complexity},\ref{sample complexity tailored for vector}. When $q<0.104$ in Fig. \ref{th6 vector sample complexity} and $q<0.167$ in Fig. \ref{sample complexity tailored for vector}, $\hat{K}$ is stabilizing even if $N_q=0$.
    
\end{example}

The error bounds provided by Theorem \ref{Th2},\ref{scalar th3},\ref{Th 3} for \Cref{ex 1} and \Cref{exa 2} under different $q$ are presented in Fig. \ref{error bound}. And the conservatism curves of Theorem \ref{Th2}, \ref{scalar th3},\ref{Th 3} for \Cref{ex 1} and \Cref{exa 2} are depicted in Fig. \ref{conservatism}.
\begin{figure}[h]
    \centering
    \includegraphics[width=0.8\linewidth]{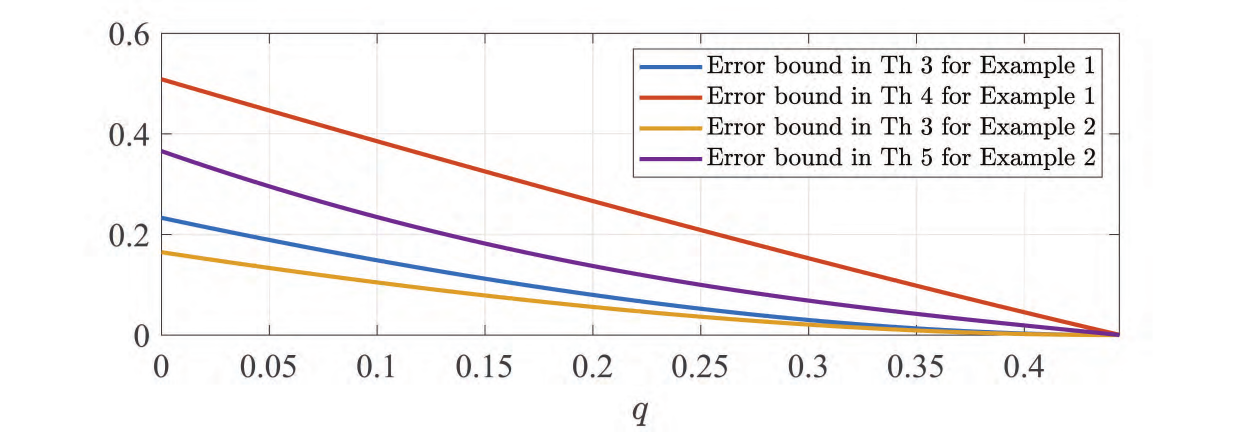}
    \caption{Bounds on the estimation error $q-\hat{q}$ provided in Theorem \ref{Th2},\ref{scalar th3},\ref{Th 3} for Example \ref{ex 1},\ref{exa 2}.}
    \label{error bound}
\end{figure}

\begin{figure}[h]
    \centering
    \includegraphics[width=0.8\linewidth]{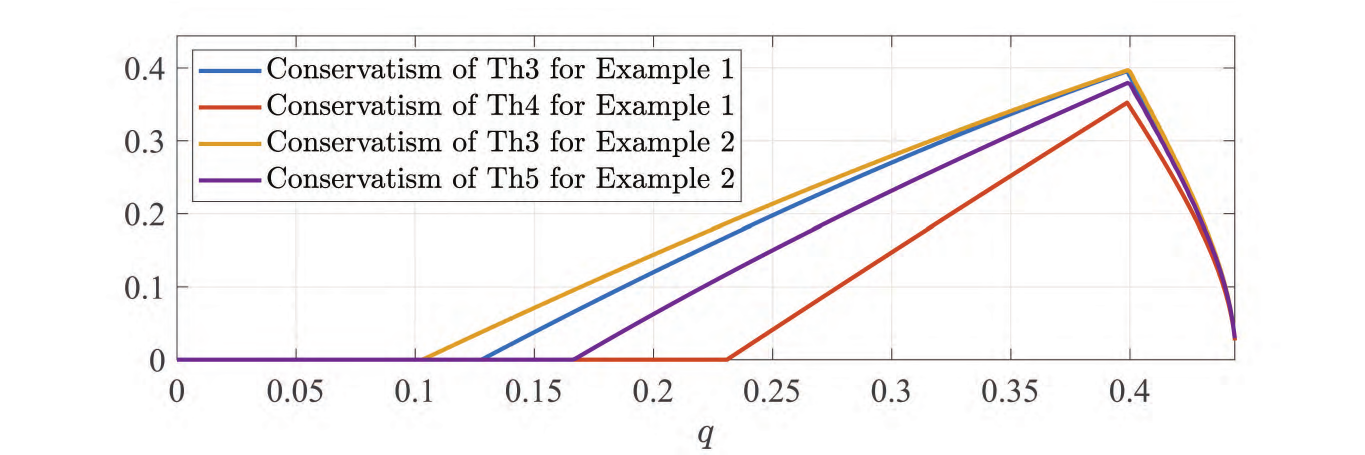}
    \caption{Conservatism of Theorem \ref{Th2},\ref{scalar th3},\ref{Th 3} refers to the extent to which the minimum stabilizable $\hat{q}$ given by these theorems exceeds that given by the necessary and sufficient conditions.}
    \label{conservatism}
\end{figure}

\section{Conclusion}\label{section7}
This paper investigates the LQR problem over an unknown Bernoulli packet loss channel, where the unknown probability is estimated using finite samples. We analyze the maximum estimation error and the minimum sample size for the CE controller to stabilize the system, and the optimality gap between the CE controller and the optimal controller. In the future, this work can be extended to the contexts of LQG and Markov packet loss channels. Furthermore, we aim to design an online controller, where the estimated probability can be improved over time, and analyze its performance.

\appendices

\section*{References}

\end{document}